\newcommand{\Rmnum}[1]{\expandafter\@slowromancap\romannumeral #1@}
\renewenvironment{proof}{\par\noindent{\itshape Proof:}}{\hspace*{\fill}$\Box$\par}
\begin{document}
\topmargin = 0mm

\itwtitle{Convolutional Network Coding Based on \\Matrix Power Series
Representation}

\itwauthor{Wangmei Guo, Ning Cai}
 {The State Key Lab. of ISN,
 \\Xidian University, China
 \\Email:wangmeiguo, caining@mail.xidian.edu.cn}

 \itwthirdauthor{Qifu Tyler Sun}
 {Institute of Network Coding,
 \\ The Chinese University of Hong Kong,
 \\ Hong Kong SAR, China
  \\Email:qfsun@inc.cuhk.edu.hk}

\itwmaketitle



\begin{itwabstract}
In this paper, convolutional network coding is formulated by means of matrix power series representation of the local encoding kernel (LEK) matrices and global encoding kernel (GEK) matrices to establish its theoretical fundamentals for practical implementations. From the encoding perspective, the GEKs of a convolutional network code (CNC) are shown to be uniquely determined by its LEK matrix $K(z)$ if $K_0$, the constant coefficient matrix of $K(z)$, is nilpotent. This will simplify the CNC design because a nilpotent $K_0$ suffices to guarantee a unique set of GEKs. Besides, the relation between coding topology and $K(z)$ is also discussed. From the decoding perspective, the main theme is to justify that the first $L+1$ terms of the GEK matrix $F(z)$ at a sink $r$ suffice to check whether the code is decodable at $r$ with delay $L$ and to start decoding if so. The concomitant decoding scheme avoids dealing with $F(z)$, which may contain infinite terms, as a whole and hence reduces the complexity of decodability check. It potentially makes CNCs applicable to wireless networks.
\end{itwabstract}

\begin{itwpaper}

\itwsection{Introduction}

Network coding was formally introduced by \cite{NIF2000}. Later, linear network coding was proved to be able to achieve the optimal data transmission rate in an acyclic multicast network \cite{LWC2003}, and an algebraic approach to linear network coding was presented in
\cite{AAATNC2003}. Since then, a rich literature on linear network coding has emerged, and a wide variety of applications have been developed.

Over a network with cycles, the propagation and encoding of sequential data symbols naturally convolve together; the propagation delay thus becomes an essential factor in network coding. To cope with cyclic data transmission, convolutional network coding was introduced in \cite{LWC2003,NWT2005,ENCFCN2005,OCNC2006}. It is a form of linear network coding which deals with the pipeline of messages as a whole rather than individually. Under the assumption of unit-delay edge transmission and local encoding kernels chosen from a finite field, an algebraic framework for a CNC was formulated in \cite{AAATNC2003}. This framework is generalized in \cite{OCNC2006} for those CNCs in which along every cycle there is at least one delay, and the LEKs are chosen among rational power series over the symbol field. The ring of rational power series was justified therein to be the proper algebraic structure for a CNC. More recently, the framework was mathematically extended in \cite{LS2009} for such CNCs in which LEKs determine a unique set of GEKs. In each framework, the main theorem guarantees the existence of an optimal CNC under the respective assumptions such that the GEK matrix at every receiver has full rank. In order to efficiently construct such an optimal CNC over a cyclic network, a polynomial-time algorithm was proposed in \cite{EF2004}. On the other hand, a method was introduced in \cite{LS2009} to adapt any acyclic algorithm for the construction of an optimal CNC on a cyclic network.

Previous studies on convolutional network coding examines every data unit of time-multiplexed data symbols as a whole, and manipulates the LEK matrix over an algebraic structure of data units. In this manner, a CNC is nothing but a linear network code over this algebraic structure. Thus, the classical field-based algebraic framework in \cite{AAATNC2003} for acyclic networks could be applied for a CNC with some additional assumptions and modifications. On the other hand, this facility obscures the implementation aspects of CNCs.

On the encoding side, a CNC is generally deployed in every encoding node by its LEKs. For acyclic networks, GEKs can be uniquely determined by LEKs \cite{NWT2005}, and the practical feasibility is also assured. However, over a cyclic network, the GEKs may not be uniquely deduced from the LEKs (see examples 3.2 and 3.3 in \cite{NWT2005}). Sometimes, the code may not be feasible even if the GEKs can be uniquely deduced because the deduced GEKs do not satisfy the rational form. In this paper, we firstly start the study of a practically feasible CNC from a new approach, that is, the power series representation of a matrix.

We adopt the ring $\mathbb{F}[(z)]$ of rational power series over the symbol field $\mathbb{F}$ as the ensemble of data units for a CNC, where $z$ is the unit time delay. Just as every rational power series over $\mathbb{F}$ can be written as $\sum_{t \ge 0}k_tz^t,k_t \in \mathbb{F}$, we shall represent a matrix $K(z)$ over $\mathbb{F}[(z)]$ as a rational power series $\sum_{t \ge 0}K_tz^t$, where $K_t$ is the matrix over $\mathbb{F}$. For example,
\begin{eqnarray*}
K(z)&=&\left( \begin{array}{*{2}{c}}1 & 1+z^2 \\ 0 & 1+z
\end{array}\right)\\&=&\left( \begin{array}{*{2}{c}}1 & 1\\ 0 & 1
\end{array}\right)+\left( \begin{array}{*{2}{c}}0 & 0 \\ 0 & 1
\end{array}\right)z+\left( \begin{array}{*{2}{c}}0 & 1 \\ 0 & 0
\end{array}\right)z^2
\end{eqnarray*}
With such a novel representation, we can characterize the field-based conditions for a CNC
\begin{itemize}
\item to determine a unique set of GEKs, which is a prerequisite for data propagation; and
\item to be practically feasible.
\end{itemize}
The related work on the encoding side of CNCs is summarized in Fig.\ref{fig:relations} and will be explored in Section \Rmnum{3}, after some related fundamentals on CNCs are reviewed in Sec. \Rmnum{2}. In detail, we first consider the implementation of a CNC and find that a CNC is practically feasible if and only if the operations at intermediate nodes are finite and causal, which is also referred in \cite{OCNC2006}. In order to characterize the property of finite and causal operations, we give the definition of expandability of a matrix. Further more, logical inconsistency exists if there is no partial encoding order along the cycle. So we define an Encoding Topology (\emph{ET}) w.r.t. $K_0$ to illustrate the encoding order. As a result, it is shown that the GEKs can be uniquely formulated from LEKs when the \emph{ET} w.r.t. $K_0$ is acyclic, in which case $K_0$ is nilpotent and results in expandable GEKs. This implies whether a CNC is practically feasible is only determined by $K_0$, which simplifies the CNC encoding design. Finally, we present some equivalent conditions in the same section.
\begin{figure}[h]
  \centering
  \includegraphics[width=7cm]{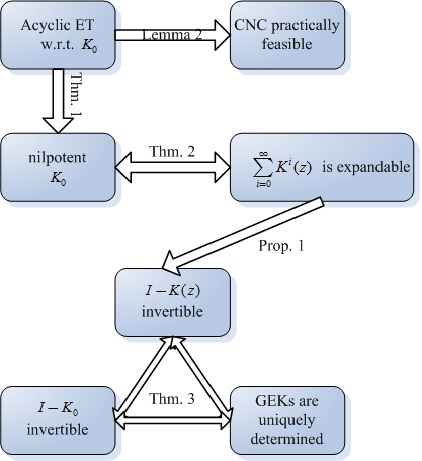}\\
  \caption{the organization and relations among the conditions to uniquely determine GEKs based on the LEK matrix $K(z)=\sum_{i=0}^{\infty}K_iz^i$.}
  \label{fig:relations}
\end{figure}

On the decoding side, in previous work, a CNC is decodable at a receiver if and only if the GEK matrix over $\mathbb{F}[(z)]$ at this receiver has full rank. Moreover, in order to decode with delay $L$, the current method (See \cite{NWT2005} for example) is to compute a matrix $D(z)$ over $\mathbb{F}[(z)]$ such that $F(z)D(z)=z^LI$, where $F(z)$ is the GEK matrix at this receiver and $D(z)$ is called the decoding matrix. However, due to the existence of cycles in the network, GEKs may involve infinite terms. In the decentralized deployment of a CNC, it is infeasible for a receiver to judge whether the code is decodable after collecting all information on its GEK matrix. Even if the GEK matrix is known by the receiver as in the centralized deployment, the computational complexity to check full rank of $F(z)$ will be very high.

In section \Rmnum{4}, based on matrix power series representation of the GEK matrix $F(z)$ for a receiver and motivated by \cite{MS1968}, we give another definition of decodability at a receiver with delay $L$, which captures the feature of sequential transmission of data symbols. Then we provide several sufficient and necessary conditions for the code to be decodable at this receiver with delay $L$, which only involve coefficient matrices $F_0, F_1, \cdots, F_L$ over $\mathbb{F}$. These conditions are field-based and hence greatly reduce the computational complexity in code design. The relations among different conditions for decodability of a CNC are summarized in Fig.\ref{fig:decoding2}.

\begin{figure}[h]
  \includegraphics[width=9.5cm]{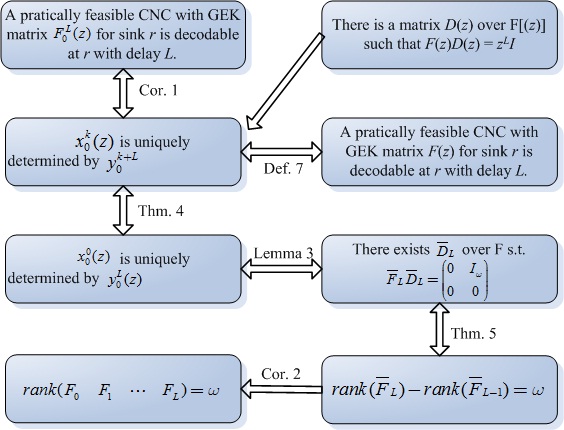}\\
  \caption{the relation among decoding conditions of a CNC. Here $x_0^L(z)$ is the sequence of source symbol vectors generated from time unit 0 to $L$ and $y_0^L(z)$ the sequence of received symbol vectors at receiver $r$ over the same period. Based on matrix power series representation, $F_0^L(z)$ is the first $L+1$ terms of $F(z)$ and $\overline F_L$ is defined by (\ref{Sequ}).}
  \label{fig:decoding2}
\end{figure}

\itwsection{Definitions and related works}
\subsection{Notation}
A communication network is modeled as a finite directed graph
$\mathcal {G}=(\mathcal{V},\mathcal{E})$ with possible cycles.
A directed edge represents a noiseless communication channel
transmitting a data symbol per unit time. Assume there is a unique source
node, denoted by $s$, in the network. The symbol alphabet is a
finite field denoted by $\mathbb{F}$. In each unit time $t$, the source $s$ generates a message, which consists of a fixed number
$\omega$ of symbols and is presented by an $\omega$-dim row vector $x_t$
over $\mathbb{F}$. For every node $v$, denote the set of incoming
channels by $In(v)$, and the set of its outgoing channels by $Out(v)$. For technical convenience, we assume that $In(s)$ consists of $\omega$ imaginary channels. An ordered pair
$\left(d,e\right)$ of channels is called an adjacent pair when there
exists a node $v$ with $d \in In(v)$ and $e \in Out(v)$.

Via a CNC, a stream of messages propagates from $s$
through a time-invariant convolutional encoder at every node, and it is represented by an $\omega$-dim row vector of $x(z)=\sum_{t\ge0}x_tz^t$, where $x_t \in \mathbb{F}^{\omega}$ and $z$ is the time variable. Every entry in $x(z)$ belongs to the principal ideal domain (PID) $\mathbb{F}[[z]]$ of power series. Such power series in $\mathbb{F}[[z]]$, which can be written in the form of $p(z)/(1+zq(z))$, where $p(z)$ and $q(z)$ are polynomials, are called rational power series. Denote the PID of all rational power series by $\mathbb{F}[(z)]$. A square matrix $A(z)$ over $\mathbb{F}[(z)]$ is invertible iff there exists a matrix $C(z)$ over $\mathbb{F}[(z)]$ such that $A(z)C(z)=I$. The determinant of $A(z)$ can be indicated by $\mbox{det}(A(z))=a_0+a_1z+a_2z^2+\cdots \in \mathbb{F}[(z)]$. A sufficient condition for $A(z)$ to be invertible over $\mathbb{F}[(z)]$ is the nonzero value of $a_0$. Denote the adjoint matrix of $A(z)$ by $A^*(z)$. Then, $A(z)A^*(z)=\mbox{det}(A(z))I$. If $a_0$ is nonzero, we have $A(z)^{-1}=A^*(z)/\mbox{det}(A(z))$ and it can be expressed as a
positive power series of $z$. Otherwise, it is not clear for us whether
$A(z)$ is invertible. Assume that $\mbox{det}(A(z))=z^t(a_t+a_{t+1}z+\cdots), \mbox{and }
a_t\neq0,t>0$, the invertibility of $A(z)$ depends on whether the entries of $A^*(z)$ have the common factor $z^t$ or not.

We next define the matrix power series representation of a matrix over $\mathbb{F}[(z)]$.

\newtheorem{definition}{Definition}
\begin{definition}
For an $m \times n$ matrix $A(z)$ over $\mathbb{F}[(z)]$, its matrix power series representation is $\sum_{t=0}^\infty A_tz^t$, where $A_t$ is an $m \times n$ matrix over $\mathbb{F}$ in which the $(i,j)^{th}$ entry is equal to the coefficient of term $z^t$ in the $(i,j)^{th}$ entry in $A(z)$. Two matrix power series are equal if and only if the coefficients of $z^t$ are equal for any $t$.
\end{definition}

\begin{definition}\label{expandable}
Let $A(z)$ be a square matrix over $\mathbb{F}[(z)]$, and $B(z)$ a function of $A(z)$ in the form of $\sum_{t=0}^{\infty} C_t A^t(z)$, where $C_t, t \geq 0$, are square matrices over $\mathbb{F}$. We say that $B(z) = \sum_{t \geq 0} B_t z^t$
is {\em expandable} if for all $t$, the matrix coefficient $B_t$ can
be written as the sum of finite terms in the form of $C_i \prod_jA_j$.
\end{definition}

\vspace{5pt}
Definition \ref{expandable} is motivated by the characterization of practical transmission operations. That is, all the operations at each intermediate node and sink are finite and causal for each time slot $t$. As will be justified in Theorem.\ref{nilpotent F}, if $B(z) = \sum_{t \geq 0} A(z)^t$, then $B(z)$ is expandable if and only if $A_0$ is nilpotent.

\subsection{Fundamentals on CNC}

In this section, we review the local and global encoding kernel descriptions of a CNC.

\begin{definition}
(Local description) An $\omega$-dim $\mathbb{F}$-CNC on a communication network with possible cycles consists of an element $k_{d,e}(z)\in \mathbb{F}[(z)]$, called the local encoding kernel (LEK), for every adjacent pair $\left(d,e\right)$.
\end{definition}

\begin{definition}\label{Def3}
(Global description) A set of global encoding kernels (GEKs) for an $\omega$-dim $\mathbb{F}$-CNC with LEKs $k_{d,e}(z)$ is an assignment of an $\omega$-dimensional column vector $f_e(z)$ over $\mathbb{F}[(z)]$ for every channel $e$ such that:
\begin{enumerate}
\item[1)] $f_e(z)=\sum_{d\in In(v)}k_{d,e}(z)f_d(z)$ when $e\in Out(v)$.
\item[2)] The vectors $f_e(z),e \in In(s)$, form the natural basis of the free module $\mathbb{F}[(z)]^{\omega}$.
\end{enumerate}
\end{definition}

Over an acyclic network, it is equivalent to define a CNC on GEKs and LEKs. However, it is not the case in a cyclic network \cite{NWT2005}. Sometimes LEKs determine multiple sets of GEKs whereas it is also possible for different LEKs to yield a same set of GEKs (See Fig.5 and Fig.7 in Sec. \Rmnum{3} for example). Normality of a CNC was introduced in \cite{LS2009}.

\begin{definition}
A CNC is said to be normal iff LEKs determine a unique set of GEKs.
\end{definition}

The following lemma justifies normality of a CNC as a prerequisite for data transmission via the code.
\newtheorem{lemma}{Lemma}
\begin{lemma}
For a normal CNC, the symbol $y_{e,t}$ transmitted over each channel $e \in Out(v)$ at time $t$ is $\sum_{d \in In(v)} {\left(
{\sum_{\tau = 0}^t  {k_{d,e,\tau } y_{d,t - \tau } } }
\right)}$.
\end{lemma}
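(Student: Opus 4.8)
The plan is to turn the global (GEK) description of the code into the local convolutional form by a single power series manipulation, using normality only to guarantee that the quantity $y_{e,t}$ is well defined in the first place. Everything after that is coefficient bookkeeping.

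First I would pin down what ``the symbol transmitted over $e$'' means. Since the CNC is normal, the LEKs $k_{d,e}(z)$ determine a \emph{unique} assignment of GEKs $\{f_e(z)\}$ satisfying Definition \ref{Def3}. I would then define the symbol stream on channel $e$ by $y_e(z)=x(z)f_e(z)=\sum_{t\ge0}y_{e,t}z^t\in\mathbb{F}[(z)]$, where $x(z)=\sum_{t\ge0}x_tz^t$ is the source stream. On the imaginary inputs $e\in In(s)$ the vectors $f_e(z)$ form the natural basis, so $y_e(z)$ reproduces the corresponding component of $x(z)$, and uniqueness of the $f_e(z)$ makes each $y_e(z)$ unambiguous. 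This is precisely where normality enters: without it the $f_e(z)$, and hence the $y_{e,t}$, would not be well defined, which is the sense in which the lemma ``justifies'' normality as a prerequisite for transmission.

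The main step is to invoke condition 1) of Definition \ref{Def3}. For $e\in Out(v)$ we have $f_e(z)=\sum_{d\in In(v)}k_{d,e}(z)f_d(z)$; left-multiplying this identity of column vectors by the row vector $x(z)$ and using that each scalar $k_{d,e}(z)$ commutes past $x(z)$ together with $x(z)f_d(z)=y_d(z)$, I obtain the scalar identity
\[
y_e(z)=\sum_{d\in In(v)}k_{d,e}(z)\,y_d(z)
\]
in $\mathbb{F}[(z)]$. It then only remains to read off coefficients. Writing $k_{d,e}(z)=\sum_{\tau\ge0}k_{d,e,\tau}z^\tau$ and $y_d(z)=\sum_{s\ge0}y_{d,s}z^s$, the Cauchy product gives the coefficient of $z^t$ in $k_{d,e}(z)y_d(z)$ as the finite sum $\sum_{\tau=0}^{t}k_{d,e,\tau}y_{d,t-\tau}$; summing over the finitely many $d\in In(v)$ and equating $z^t$-coefficients on both sides (two power series coincide iff all their coefficients agree) yields exactly the asserted formula for $y_{e,t}$.

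I expect the only genuine point to guard against is circularity rather than any real computation. Because the network may contain cycles, the $\tau=0$ term couples $y_{e,t}$ to the time-$t$ symbols $y_{d,t}$ on incoming channels, so one might worry the relation is an ill-posed recursion. The argument above deliberately sidesteps this: I am not constructing the $y_{e,t}$ step by step but verifying that the already-determined series $y_e(z)=x(z)f_e(z)$ satisfies the stated identity, so no fixed point ever has to be inverted. Normality is exactly the hypothesis that legitimizes this shortcut, and flagging that dependence is, I think, the whole content of the lemma.
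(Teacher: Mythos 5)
Your proposal is correct and follows essentially the same route as the paper's proof: define $y_e(z)=x(z)\cdot f_e(z)$, apply condition 1) of the global description to get $y_e(z)=\sum_{d\in In(v)}k_{d,e}(z)y_d(z)$, and equate coefficients of $z^t$ via the Cauchy product. Your added remarks on where normality enters and on why the apparent recursion is not circular are sound elaborations of what the paper leaves implicit (and you avoid the paper's typo of writing $y_{e,t-\tau}$ where $y_{d,t-\tau}$ is meant).
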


\begin{proof}
Let source $s$ generate a message $x(z) = \sum\limits_{t \ge 0} {x_t
z^t }$, which is an $\omega$-dimensional row vector over $\mathbb{F}[(z)]$. Through a CNC, each channel $e$ carries the power series $x(z)\cdot
f_e(z)$ of data symbols. That is,
\begin{eqnarray*}
  y_e (z) &=& x(z) \cdot f_e (z) \\
   &=& x(z) \cdot \sum\limits_{d \in
In(r)} {k_{d,e}(z)f_d (z)} \\
   &=& \sum\limits_{d \in In(r)}
{k_{d,e} (z)\left( {x(z) \cdot f_d (z)} \right)} \\
   &=& \sum\limits_{d
\in In(r)} {k_{d,e}(z)y_d (z)}
\end{eqnarray*}

In matrix power series representation, $y_e (z) = \sum_{t \geq 0} {y_{e,t} z^t }$ and $k_{d,e}(z) = \sum_{t \geq 0} {k_{d,e,t} z^t }$. Then \[
y_{e,t}  = \sum_{0 \le \tau  \le t} {x_\tau  f_{e,t - \tau }
}\]
and \[ y_{e,t}  = \sum_{d \in In(r)} {\left(
{\sum_{\tau = 0}^t  {k_{d,e,\tau } y_{e,t - \tau } } }
\right)}.
\]

\end{proof}

Let $n$ be the number of channels in the
network. For a normal CNC, denote by $K(z)$ the $n\times
n$ matrix $[k_{d,e}(z)]_{d,e \in \mathcal {E}}$, and $F(z)$ the
$\omega \times n$ matrix $[f_e(z)]_{e\in \mathcal{E}}$. Let $H_s$ represent the $\omega \times n$ matrix $[I_{\omega}\; \bold{0}]$. Then we have the classical equation from \cite{AAATNC2003}
\[
\begin{array}{l}
 F(z) = H_s  + F(z) \cdot K(z)
 \end{array}
\]
and we have
\begin{equation}\label{GM}F(z)(I_n-K(z))=H_s\end{equation}

According to (\ref{GM}), if $I_n-K(z)$ is invertible, then the code is normal. A sufficient condition for the invertibility of $I_n-K(z)$ is that $I+K(z)+K^2(z)+\cdots$ is expandable, i.e.,
\begin{eqnarray*}
  (I_n-K(z))(\sum_tB_tz^t) &=&(I-K(z))( \sum_{t=0}^{\infty}K^t(z)) \\
    &=& ( \sum_{t=0}^{\infty}K^t(z))- \sum_{t=1}^{\infty}K^t(z)\\
    &=& I+\sum_{t=1}^{\infty}K^t(z)-\sum_{t=1}^{\infty}K^t(z) \\
    &=& I
\end{eqnarray*}
where $\sum_tB_tz^t=\sum_{t=0}^{\infty}K^t(z)$. That is, $I_n-K(z)$ has an inverse $B(z)=\sum_tB_tz^t$. So we have the following proposition.

\newtheorem{proposition}{Proposition}
\begin{proposition}
$I_n-K(z)$ is invertible if $I+K(z)+K^2(z)+\cdots$ is expandable.
\end{proposition}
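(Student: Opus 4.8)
The plan is to produce an explicit two-sided inverse of $I_n-K(z)$ that lives in $\mathbb{F}[(z)]$, taking the telescoping identity displayed just above the statement as the computational core and supplying the two things that computation does not by itself establish: that the candidate inverse is a well-defined object, and that it is rational rather than merely a formal power series. First I would use the hypothesis to make sense of $B(z):=\sum_{t=0}^{\infty}K^t(z)$ as a genuine matrix power series. By Definition \ref{expandable}, expandability of this series means that for every $t$ the coefficient $B_t$ of $z^t$ is a \emph{finite} sum of products of the constant coefficient matrices, hence a well-defined $n\times n$ matrix over $\mathbb{F}$; consequently $B(z)=\sum_{t\ge 0}B_t z^t$ is a well-defined element of $\mathbb{F}[[z]]^{n\times n}$. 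This is precisely where the hypothesis does its essential work: without expandability the infinite sum need not converge coefficientwise in $z$, and the candidate inverse would not even be defined.

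Next I would establish $(I_n-K(z))B(z)=I$. This is exactly the telescoping computation already written out before the proposition: multiplying $(I_n-K(z))\sum_{t\ge 0}K^t(z)$ cancels all positive powers of $K(z)$ and leaves $I$. Because each $z$-coefficient $B_t$ is a finite sum, the rearrangement $\sum_{t\ge0}K^t(z)-\sum_{t\ge1}K^t(z)=I$ is legitimate termwise in $z$, so this manipulation is rigorous and not merely formal. At this stage $B(z)$ is exhibited as a two-sided inverse inside the ring $\mathbb{F}[[z]]$.

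The subtle point, and the step I expect to be the main obstacle, is that the paper defines invertibility over the ring $\mathbb{F}[(z)]$ of \emph{rational} power series, whereas the telescoping a priori only places $B(z)$ in the larger ring $\mathbb{F}[[z]]$. To close this gap I would pass to determinants. Taking determinants in $(I_n-K(z))B(z)=I$ gives $\det(I_n-K(z))\cdot\det(B(z))=1$ in $\mathbb{F}[[z]]$, so $\det(I_n-K(z))$ is a unit of $\mathbb{F}[[z]]$ and therefore has nonzero constant term. Since $\det(I_n-K(z))$ is a polynomial expression in the entries of $K(z)$, it already lies in $\mathbb{F}[(z)]$, and a rational power series with nonzero constant term is a unit of $\mathbb{F}[(z)]$, as recalled in Section \Rmnum{2}. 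Invoking the adjoint relation $(I_n-K(z))(I_n-K(z))^{*}=\det(I_n-K(z))I$ and dividing by the determinant then yields
\[
(I_n-K(z))^{-1}=\frac{(I_n-K(z))^{*}}{\det(I_n-K(z))},
\]
whose entries are rational power series. Hence $I_n-K(z)$ is invertible over $\mathbb{F}[(z)]$, and by uniqueness of inverses this rational inverse coincides with the expandable series $B(z)$, which completes the argument.
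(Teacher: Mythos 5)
Your proof is correct, and its computational core---exhibiting $B(z)=\sum_{t\ge 0}K^t(z)$ as an inverse via the telescoping identity $(I_n-K(z))\sum_{t\ge 0}K^t(z)=I$, justified coefficientwise by the finiteness that expandability provides---is exactly the paper's proof. Where you genuinely depart from the paper is the final step. The paper stops at the telescoping identity and declares $B(z)$ to be the inverse, which strictly speaking only establishes invertibility inside $\mathbb{F}[[z]]^{n\times n}$: nothing in the telescoping argument shows that $B(z)$, or any inverse, has entries in the ring $\mathbb{F}[(z)]$ of \emph{rational} power series, yet that is what the paper's own definition of invertibility requires. Your determinant--adjugate step closes this gap: from $\det(I_n-K(z))\det(B(z))=1$ in $\mathbb{F}[[z]]$ you get a nonzero constant term for $\det(I_n-K(z))$, which itself lies in $\mathbb{F}[(z)]$ because it is a polynomial in the (rational) entries of $K(z)$, so $(I_n-K(z))^{*}/\det(I_n-K(z))$ is an inverse with rational entries, and it must equal $B(z)$ by uniqueness of two-sided inverses. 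So your argument is a strict refinement: the paper's version is shorter but conflates invertibility over $\mathbb{F}[[z]]$ with invertibility over $\mathbb{F}[(z)]$, while yours is rigorous with respect to the ring the proposition is actually stated over. An alternative way to close the same gap, more in the spirit of the paper's later development, is to note that the constant term of $\det(I_n-K(z))$ equals $\det(I_n-K_0)$, and that expandability of $A_0=I+K_0+K_0^2+\cdots$ forces $K_0$ to be nilpotent (Theorem \ref{nilpotent F}), hence $\det(I_n-K_0)\neq 0$; your route has the advantage of being self-contained and not leaning on that later theorem.
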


However, $I_n-K(z)$ is still possible to be invertible when $I+K(z)+K^2(z)+\cdots$ is not expandable. An example is shown in
Fig. \ref{fig:nonilpotent} in Section \Rmnum{3}.

\subsection{Practical feasibility of a CNC and Encoding topology}

Normality of a CNC is not a sufficient condition for practical implementation of the code because we cannot wait for the whole GEKs to start decoding at the receiver, especially in the case of infinite GEKs. In \cite{OCNC2006,LS2009}, the causality of a CNC is further defined to guarantee physical implementation. However, in decentralized practical application, receivers can only get the GEKs term by term. According to this term by term transformation feature, below we formally justify that causality is a necessary and sufficient condition for a code to be practically feasible in terms of {\em encoding topology (ET)}.

\begin{definition}\label{DefPH}
A normal $\mathbb{F}$-CNC is said to be practically feasible iff each node $v$ can calculate the symbols to be transmitted over its outgoing channels at time slot $t$ by its previously received symbols and LEKs $k_{d,e,\tau}$, where $d\in In(v)$, $e\in Out(v)$, and $\tau \le t$.
\end{definition}

Definition \ref{DefPH} requires that every operation at intermediate nodes is causal and finite. Similar to linear network codes, we would like to find a partial order to define the causal encoding operation. For every CNC on a network, we define {\em ET} with respect to the LEK matrix $K(z)$ to be a directed graph in which $\mathcal{E}$ is the node set and there is a channel from $d$ to $e$ iff $k_{d,e}(z) \not=0$ and take $k_{d,e}(z)$ as the coding weight over the link.

\vspace{5pt}

\noindent\textbf{Example 1.} Assume that we multicast two messages from source $S$ to both the nodes $X$ and $Y$ with given LEKs depicted in Fig. \ref{fig:acyclic}. Easily, we have the \emph{ET} w.r.t. the LEK matrix $K(z)$ shown in Fig. \ref{fig:nilpotent} and the \emph{ET} w.r.t. $K_0$ in Fig. \ref{fig:acyclic1}.

\begin{figure}[h]
    \centering
    \includegraphics[width=8cm,height=6cm]{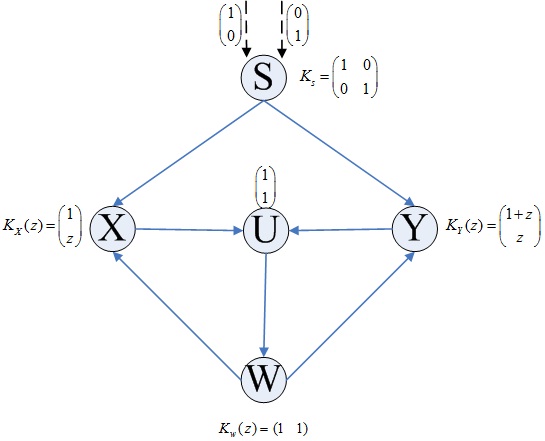}\\
    \caption{A convolutional network code with $K_0$ nilpotent}
    \label{fig:nilpotent}
\end{figure}

\begin{figure}[h]
    \centering
        \includegraphics[width=5.5cm,
        height=4.5cm]{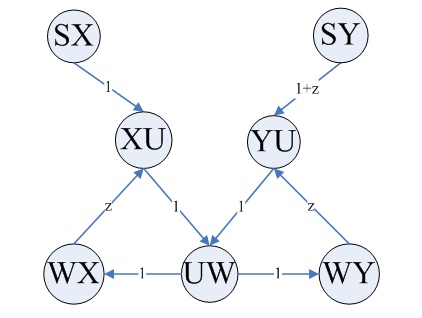}\\
    \caption{The \emph {ET} w.r.t. $K(z)$ in Fig. \ref{fig:nilpotent} is acyclic.}
    \label{fig:acyclic}
\end{figure}

\begin{figure}[h]
    \centering
        \includegraphics[width=5.5cm,
        height=4.5cm]{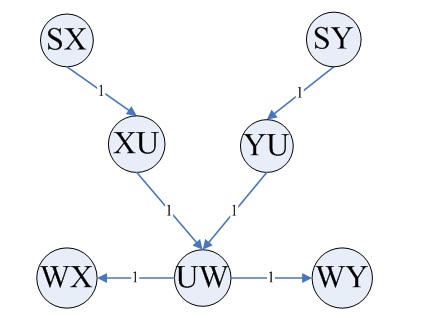}\\
    \caption{The \emph {ET} w.r.t. $K_0$ in Fig. \ref{fig:nilpotent} is acyclic.}
    \label{fig:acyclic1}
\end{figure}

\vspace{5pt}

To perform a CNC on a network we need an {\it encoding order}
$\prec_i$ on the node set for all time slots. We call a set of encoding
orders time invariant if the encoding orders at all time slots
are the same. A CNC is practically feasible if it is practically
feasible w.r.t. a set of encoding orders. \emph{ET} implies the coding relations among edges, and we can designate a partial order when the \emph{ET} is acyclic. Further, we find an acyclic {\em ET} w.r.t. $K_0$ can determine a reasonable order. So we have the following lemma.

\begin{lemma}
A CNC is practically feasible if and only if
the {\em ET} w.r.t. $K_0$ is acyclic.
\end{lemma}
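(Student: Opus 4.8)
The plan is to read off from the symbol recursion of Lemma 1 exactly which dependencies among the transmitted symbols $y_{e,t}$ are \emph{instantaneous} (same time slot) and which are \emph{strictly causal} (earlier slots), and then to show that acyclicity of the \emph{ET} w.r.t. $K_0$ is precisely the condition under which the instantaneous dependencies admit a finite causal resolution. First I would rewrite the recursion by splitting off the $\tau=0$ term,
$$y_{e,t}=\sum_{d\in In(v)}k_{d,e,0}\,y_{d,t}+\sum_{d\in In(v)}\sum_{\tau=1}^{t}k_{d,e,\tau}\,y_{d,t-\tau}.$$
The second double sum references only symbols $y_{d,t-\tau}$ with $t-\tau\le t-1$, i.e. symbols from strictly earlier slots, so it is always computable from previously received data and never obstructs feasibility. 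The only same-slot coupling is the term $k_{d,e,0}\,y_{d,t}$, and $k_{d,e,0}\neq 0$ is by definition exactly the condition for an arc $d\to e$ in the \emph{ET} w.r.t. $K_0$. Hence the instantaneous dependency digraph among the current-slot symbols coincides with the \emph{ET} w.r.t. $K_0$.

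For the ``if'' direction, suppose the \emph{ET} w.r.t. $K_0$ is acyclic. Then its node set $\mathcal{E}$ admits a topological order, which I would adopt as the time-invariant encoding order $\prec_i$. I process slots in increasing $t$, and within slot $t$ compute the $y_{e,t}$ in this topological order. When $y_{e,t}$ is reached, every $y_{d,t}$ occurring in its $\tau=0$ term satisfies $d\prec_i e$ and has already been produced, while all $\tau\ge 1$ terms were produced in earlier slots; thus $y_{e,t}$ is a finite $\mathbb{F}$-linear combination of already-available symbols weighted by the coefficients $k_{d,e,\tau}$ with $\tau\le t$. This is exactly the computation sanctioned by Definition \ref{DefPH}, so the code is practically feasible.

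For the ``only if'' direction I would argue contrapositively: if the \emph{ET} w.r.t. $K_0$ contains a cycle $e_1\to e_2\to\cdots\to e_m\to e_1$ with all corresponding constant coefficients nonzero, then at each slot $t$ the symbols $y_{e_1,t},\dots,y_{e_m,t}$ are mutually coupled through their $\tau=0$ terms. Back-substituting around the cycle expresses $y_{e_1,t}$ as $\big(\prod_i k_{e_i,e_{i+1},0}\big)y_{e_1,t}$ plus a combination of strictly earlier symbols, so $y_{e_1,t}$ genuinely depends on itself within the same slot. No encoding order can place every $e_i$ after its in-cycle predecessor, so no node in the cycle can compute its outgoing symbol from \emph{previously received} symbols alone; resolving the coupling would instead require a simultaneous solve (a division by $1-\prod_i k_{e_i,e_{i+1},0}$, or an outright inconsistency when this factor equals $1$), which is not the finite causal term-by-term computation demanded by Definition \ref{DefPH}. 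Hence the code is not practically feasible.

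The main obstacle will be making the ``only if'' direction airtight: I must show that a same-slot cycle defeats \emph{every} conceivable computation scheme, not merely the naive one. The crux is to pin down that practical feasibility, as defined, permits only forming finite $\mathbb{F}$-linear combinations of already-received symbols with the coefficients $k_{d,e,\tau}$, and explicitly forbids the simultaneous linear solve that a cyclic instantaneous dependency forces. This is the ``logical inconsistency along the cycle'' alluded to before the lemma, and formalizing that no admissible ordering or substitution can circumvent it is where the real work lies.
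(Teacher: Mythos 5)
Your proposal takes essentially the same approach as the paper's proof: both isolate the same-slot ($\tau=0$) terms of the convolution, identify the resulting instantaneous-dependency digraph with the \emph{ET} w.r.t.\ $K_0$, use a (topological) order on the acyclic \emph{ET} as the time-invariant encoding order for sufficiency, and argue for necessity that a same-slot dependency cycle is incompatible with every encoding order. The difficulty you flag in making the ``only if'' direction airtight is present in the paper's own proof at the same level of informality --- indeed the paper later exhibits (Example 2) a code whose \emph{ET} w.r.t.\ $K_0$ is cyclic yet whose overlapping-cycle cancellations make it ``practically realizable,'' and explicitly leaves that tension as an open question.
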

\begin{proof}
Let $k_{d,e}(z)=\sum_{t=0}^{\infty}k_{d,e,t}z^t$ and
$y_e(z)=\sum_{t=0}^{\infty}y_{e,t} z^t$ be LEK for adjacent
pair $(d,e)$ and the transmitted symbols over an outgoing channel $e$ of a node $v$, respectively. Then
form $e \in Out(v)$ by \[
y_e (z) = \sum\limits_{t = 0}^\infty  {y_{e,t} z^t }  =
\sum\limits_{d \in In(v)} {k_{d,e} (z)y_d (z)}.
\]
Therefore, the transmitted symbol of $e$ at time slot $t$ is
\begin{eqnarray}\label{Cconsistent}
  y_{e,t} &=& \sum_{d \in In(v)} \sum_{\tau=0}^t
k_{d,e,t-\tau}y_{d,\tau} \nonumber \\
   &=& \sum_{d \in In(v): d \prec e} \sum_{\tau=0}^t
k_{d,e,t-\tau}y_{d,\tau}  \nonumber \\
   & &+\sum_{d \in In(v): e \prec d} \sum_{\tau=0}^{t-1}
k_{d,e,t-\tau}y_{d,\tau}  \nonumber \\
   & & +\sum_{d \in In(v): e \prec d}k_{d,e,0}y_{d,t}.
\end{eqnarray}
The data symbol $y_{e,t}$ can only be computed after the arrival of all $y_{d, \tau}$, $\tau \leq t$ and $d \in In(v)$ at $v$. At time slot $t$, notice all $y_{d, \tau}$, $\tau \leq t$ have been received by $v$ except those $y_{d,t}$ with $e \prec d$. We can calculate $y_{e,t}$ according to (\ref{Cconsistent}) only if $K_{d,e,0}=0$. That is, a CNC is practically feasible w.r.t. $\{\prec_i\}$ only if $k_{d,e,0}=0$ for all adjacent pairs $(d,e)$ with $e \prec d$.

Hence, according to the partial order $\{\prec_i\}$ defined on the acyclic {\em ET}, all $y_{d,\tau }, d \prec e$ have been calculated for all $\tau  \le t$ because all incoming edges of $v$ are previous to channel $e$. Meanwhile, all $y_{d, \tau}, e \prec d$ are also obtained from previous time slots $\tau <t$. That is, we can compute the output over channel $e$ in terms of the first two items, whose elements are all known at time slot $t$.
\end{proof}

\vspace{5pt}

As for an acyclic network, the {\itshape ET} w.r.t. any LEK matrix $K$ is acyclic. Moreover, $K$ is a strictly
upper triangular matrix, which is also a {\itshape{nilpotent matrix}} which is an $n\times n$ square matrix $K$
such that $K^m=0$ for some positive integer matrix power $m$ \cite{SOTPM1962}. Meanwhile, it is known that the entry in $K^m$ represents the coding gain between any pair of channels through length $m$ path. Hence there must exist an integer $m \leq n$ such that $K^m=0$ because the longest path is bounded in acyclic networks. However, over a cyclic network,
the LEK matrix $K(z)$ is no longer a strictly upper triangular matrix, we will discuss the properties of LEK in this case in the following section.

\itwsection{The conditions to uniquely determine the GEK matrix $F(z)$}

So far, most work in convolutional network coding was developed in the case of a unit time delay network or under the assumption that there
is at least one delay along every cycle. In either case, the \emph {ET} w.r.t.
$K_0$ of a given CNC is acyclic. We will show that $K_0$ is nilpotent, which is also a necessary and sufficient condition to expand
$I+K(z)+K^2(z)+\cdots$. Then the GEKs can be uniquely determined by $K(z)$ of the CNC based on Proposition 1.

\newtheorem{theorem}{Theorem}
\begin{theorem}\label{Thm:nocycle2nilpotent}
$K_0$ is nilpotent if there is no cycle in the \emph {ET} w.r.t. $K_0$.
\end{theorem}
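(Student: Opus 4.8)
The plan is to exploit the standard correspondence between powers of a matrix and directed walks in its associated graph. Write $K_0 = [k_{d,e,0}]_{d,e \in \mathcal{E}}$, so that the \emph{ET} w.r.t. $K_0$ is exactly the directed graph on the vertex set $\mathcal{E}$ whose adjacency pattern is that of $K_0$: there is an arc from $d$ to $e$ precisely when the $(d,e)$ entry $k_{d,e,0}$ is nonzero. First I would record the elementary identity
\[
(K_0^m)_{d,e} = \sum_{d = u_0,\, u_1,\, \ldots,\, u_m = e} k_{u_0,u_1,0}\, k_{u_1,u_2,0} \cdots k_{u_{m-1},u_m,0},
\]
where the sum ranges over all sequences of $m+1$ channels. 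Each summand is a product of entries of $K_0$, and since $\mathbb{F}$ is a field this product is nonzero if and only if every factor is nonzero, i.e. if and only if $(u_0, u_1, \ldots, u_m)$ is a directed walk of length $m$ in the \emph{ET} w.r.t. $K_0$.

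Next I would invoke acyclicity. In a finite directed graph with no directed cycle, a walk can never revisit a vertex, since a repeated vertex would close off a directed cycle. Hence every directed walk in the \emph{ET} w.r.t. $K_0$ is a simple path, so it visits distinct channels and its length is at most $|\mathcal{E}| - 1 = n - 1$. Consequently there is no directed walk of length $n$, and by the correspondence above every summand of $(K_0^n)_{d,e}$ vanishes, for all $d, e \in \mathcal{E}$. Therefore $K_0^n = 0$, and $K_0$ is nilpotent.

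The point that makes the argument clean is that I do not rely on $(K_0^n)_{d,e}$ being nonzero for any particular walk; potential cancellation among summands is irrelevant, because acyclicity forces the entire index set of the sum to be empty once $m \geq n$. The main (and quite modest) obstacle is therefore just to state the walk/monomial correspondence correctly and to justify that in an acyclic digraph every walk is a simple path, which is what yields the uniform bound $n$ on the nilpotency index.
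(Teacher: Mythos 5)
Your proof is correct and takes essentially the same approach as the paper's: both identify the entries of $K_0^m$ with directed walks of length $m$ in the \emph{ET} w.r.t.\ $K_0$ and then use acyclicity to bound walk lengths by $n-1$. Your direct expansion of $(K_0^m)_{d,e}$ is in fact a tidier rendering than the paper's induction, and your explicit observations that cancellation among summands is irrelevant and that $K_0^n=0$ make the argument sharper than the original, which only asserts that path lengths are ``bounded.''
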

\begin{proof}
We prove by induction. Denote by $K_{ij}^m$ the entry of $K_0^m $ in the $i$-th row $j$-th column, which indicates the coding factor between $i$ and $j$ along the paths of length $m$. The case for $m=1$ is trivial, where $K_0$ represents the one hop transmission matrix. Assume that the statement holds for $m-1$.
By matrix multiplication, we can obtain
\[K_{i,j}^m=\sum_lK_{i,l}^{m-1}K_{l,j}.\] It is easy to see that at least
one of $K_{i,l}^{m-1}$ and $K_{l,j}$ equals to zero for any $l$, because otherwise there must be a path from $i$ to $l$ of length $(m-1)$ and
with $l$ adjacent to $j$, which implies there is a path between $i$ and $j$ of length $m$. Given acyclic \emph {ET} w.r.t. $K_0$, the maximum
paths between arbitrary two nodes are bounded. Hence $K_0$ is nilpotent.
\end{proof}

\vspace{5pt}
The idea of this theorem is referred in \cite{AAATNC2003}.
Inversely, a nilpotent $K_0$ does not imply the acyclic \emph {ET} w.r.t. $K_0$.

\begin{figure}[h]
    \centering
        \includegraphics[width=8cm,
        height=6cm]{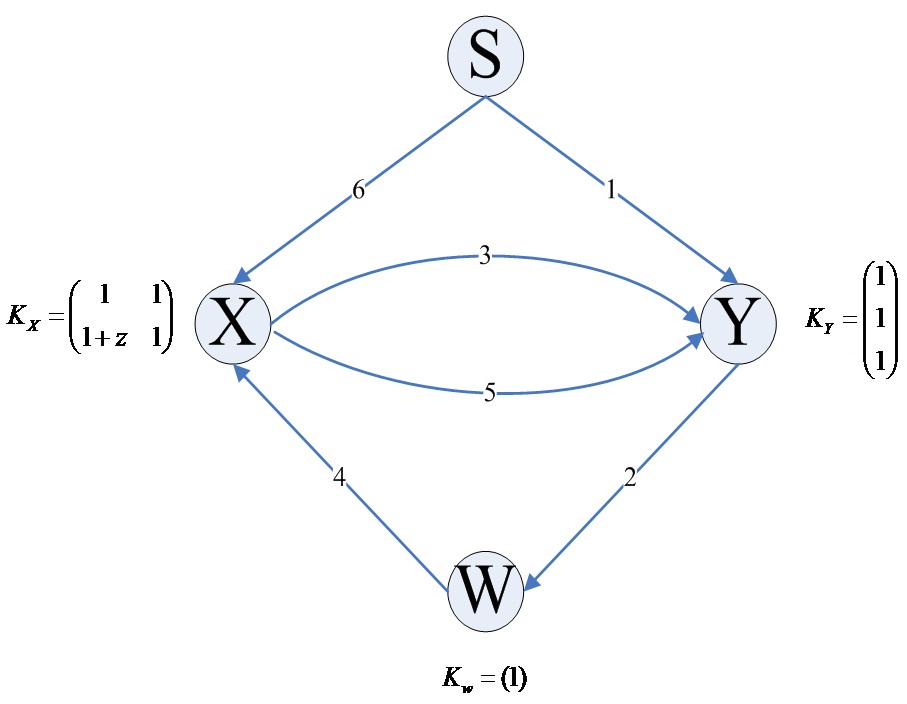}\\
    \caption{a cyclic network with $K_0^m=0$}
    \label{fig:cyclic nilphent}
\end{figure}

\noindent\textbf{Example 2.} Assume that we multicast two messages from source $S$ to both the nodes $X$ and $Y$ with given LEKs depicted by Fig.\ref{fig:cyclic nilphent}. The
local encoding kernel $K_0$ is
\[K_0=\left( \begin{array}{*{6}{c}}0 & 1 & 0 & 0 & 0 & 0\\ 0 & 0 & 0 & 1 & 0 & 0
\\ 0 & 1 & 0 & 0 & 0 & 0\\ 0 & 0 & 1 & 0 & 1 & 0\\ 0 & 1 & 0 & 0 & 0 & 0 \\ 0 & 0 & 1 & 0 & 1 & 0\end{array}\right)\]
and $K_0^4=0$. The corresponding \emph {ET} in Fig.\ref{fig:cyclic} contains
two cycles.

\begin{figure}[h]
    \centering
        \includegraphics[width=5cm,
        height=5cm]{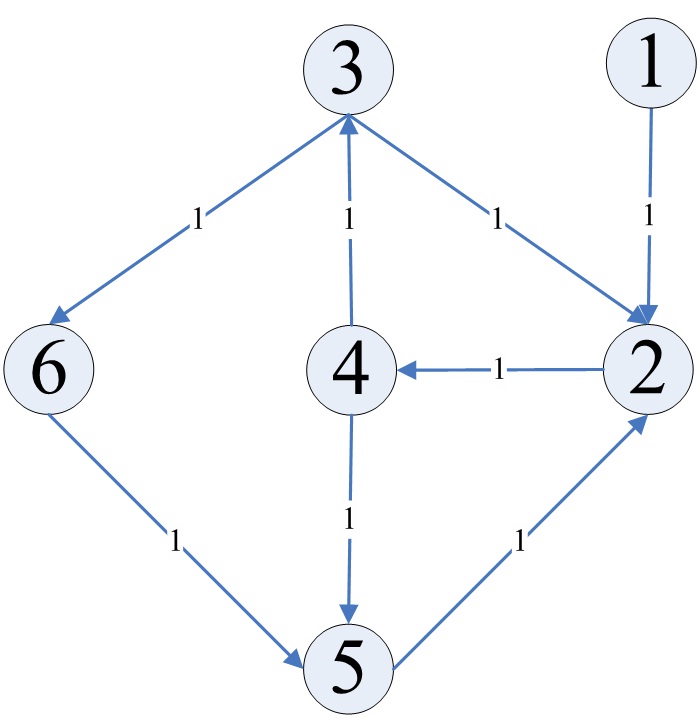}\\
    \caption{the \emph {ET} w.r.t. $K(z)$ in Fig.\ref{fig:cyclic nilphent} is cyclic}
    \label{fig:cyclic}
\end{figure}

In the case of Fig.\ref{fig:cyclic nilphent}, $F(z)$ is practically realizable,
which is as a result of two overlapping cycles. The GEK for channel 2 is derived from channels 1, 3 and 5. The
same message components in channel 3 and 5 are counteracted.
Therefore, the recursion along the cycle is destroyed, and the
logical contradiction can be avoided. This is equivalent to the
acyclic network coding in Fig.\ref{fig:equal encoder}, that is, channel 3 and 5 are actually not the incoming channels of node $Y$ in the calculation of the output over channel 2. However, it
remains unclear whether all the codes with nilpotent $K_0$ can avoid
the logical contradiction. This will be left open for further work.

\begin{figure}
    \centering
        \includegraphics[width=8cm,
        height=6cm]{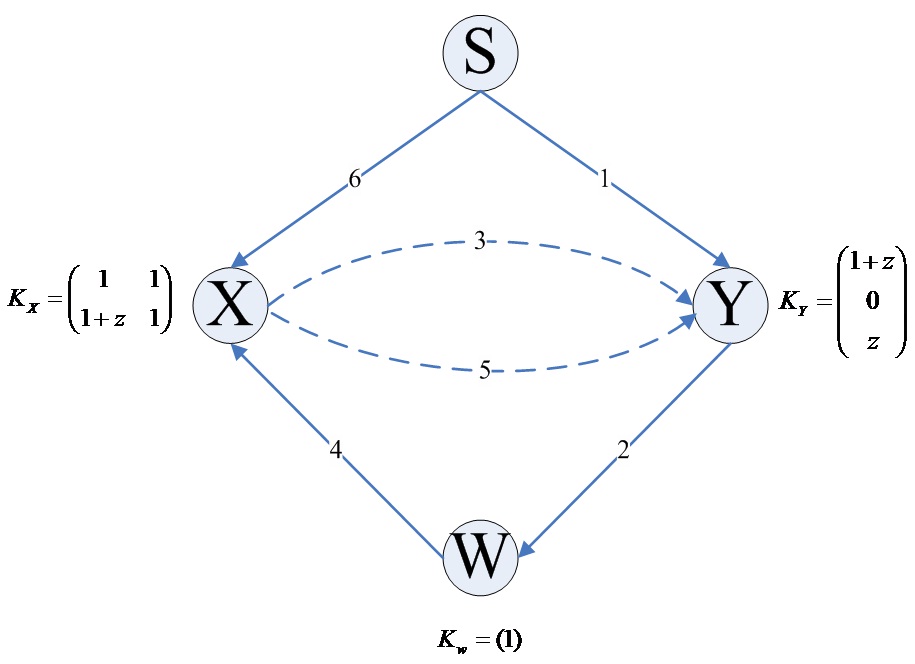}\\
    \caption{an equivalent encoder w.r.t. Fig.\ref{fig:cyclic nilphent}}
    \label{fig:equal encoder}
\end{figure}

We have shown that feasible $F(z)$ yields to expandable
$I+K(z)+K^2(z)+\cdots$. Now we will give an expandable condition of $I+K(z)+K^2(z)+\cdots$.

\begin{theorem}{\label{nilpotent F}}
Let $K_0$ be the constant coefficient matrix of $K(z)$. Then
$I+K(z)+K^2(z)+\cdots$ is expandable if and only if $K_0$ is
nilpotent.
\end{theorem}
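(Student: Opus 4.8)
The plan is to work directly with the matrix power-series expansion of $B(z) := I + K(z) + K^2(z) + \cdots = \sum_{t \geq 0} K(z)^t$ (so that $C_t = I$ for all $t$ in Definition \ref{expandable}) and to track the coefficient of each power $z^\ell$. Writing $K(z) = \sum_{s \geq 0} K_s z^s$ and multiplying out, the coefficient of $z^\ell$ in $B(z)$ is the formal sum
\[ B_\ell = \sum_{t \geq 0} \; \sum_{\substack{s_1 + \cdots + s_t = \ell \\ s_i \geq 0}} K_{s_1} K_{s_2} \cdots K_{s_t}, \]
a sum of products of the field matrices $K_s$. By Definition \ref{expandable}, $B(z)$ is expandable precisely when, for every $\ell$, only finitely many of these products are nonzero, so that the canonical expansion of $B_\ell$ is a genuine finite matrix sum. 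The whole statement thus reduces to a counting question about nonzero products $K_{s_1} \cdots K_{s_t}$ of prescribed total degree $\ell$.

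For necessity (expandable $\Rightarrow K_0$ nilpotent) I would specialize to $\ell = 0$. The only tuples contributing to $z^0$ are those with every $s_i = 0$, so $B_0 = \sum_{t \geq 0} K_0^t$, whose $t$-th term is the power $K_0^t$. Since $\mathbb{F}^{n \times n}$ carries no convergence notion, this formal sum reduces to a finite sum of matrix products only if $K_0^t = 0$ for all sufficiently large $t$; that is exactly nilpotency of $K_0$.

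For sufficiency ($K_0$ nilpotent $\Rightarrow$ expandable), suppose $K_0^m = 0$ and fix $\ell$. The key observation is a run-length bound: in any product $K_{s_1} \cdots K_{s_t}$ the factors with $s_i = 0$ are copies of $K_0$, and a maximal block of $r$ consecutive such factors equals $K_0^r$; if $r \geq m$ this block vanishes and kills the whole product. Hence in every nonzero product each maximal run of $K_0$-factors has length at most $m-1$. Since the positive indices sum to $\ell$, there are at most $\ell$ of them, leaving at most $\ell + 1$ gaps into which the $K_0$-factors may be inserted, each gap holding at most $m-1$ of them. Therefore a nonzero product has length $t \leq \ell + (\ell + 1)(m - 1)$, a bound depending only on $\ell$ and $m$. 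For each admissible $t$ there are only finitely many compositions $(s_1, \ldots, s_t)$ of $\ell$ into $t$ nonnegative parts, so only finitely many products contribute to $B_\ell$, and $B(z)$ is expandable.

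The main obstacle is the sufficiency direction, and within it the correct bookkeeping of the run-length argument: one must notice that it is the \emph{consecutiveness} of the zero-index factors, not merely their total number, that matters, since nilpotency of $K_0$ only annihilates a sufficiently long consecutive block. Getting the bound on $t$ right, counting the gaps as $\ell + 1$ rather than as the full length $t$, is the crux; everything else is routine expansion together with the elementary fact that compositions of a fixed integer into a fixed number of parts are finite in number.
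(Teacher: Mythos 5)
Your proposal is correct and follows essentially the same route as the paper's own proof: both expand the coefficient of $z^\ell$ as a formal sum of products of the $K_s$, obtain necessity from the constant term $B_0=\sum_{t\ge 0}K_0^t$ being finite only when $K_0$ is nilpotent, and obtain sufficiency by observing that in a nonzero product every maximal run of consecutive $K_0$-factors has length at most $m-1$ while the positive indices sum to $\ell$, which bounds the number of contributing terms. The only difference is cosmetic bookkeeping: the paper counts pairs (array of positive-degree factors, run-length vector), giving the bound $\ell^\ell m^{\ell+1}$, whereas you bound the total product length by $\ell+(\ell+1)(m-1)$ and count compositions — the same finiteness argument.
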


\begin{proof}
Assume $A(z)=\sum_{i=0}^{\infty}A_iz^i=I_n+
\sum_{r=1}^{\infty}K^r(z)$. By equating the matrix coefficients, we obtain:
\begin{eqnarray*}
A_0 &=& I_n+K_0+K_0^2+\cdots \\ A_1 &=& K_1+(K_0K_1+K_1K_0) \\&
&+(K_0K_0K_1+K_0K_1K_0+K_1K_0K_0)+\cdots
\\A_2 &=& K_2+(K_0K_2+K_2K_0+K_1K_1) \\ & &+(K_0K_0K_2+K_0K_1K_1+K_0K_2K_0+K_1K_0K_1 \\ & &+K_1K_1K_0+K_2K_0K_0)+\cdots
\\ &\vdots& \\A_i &=&\sum_{j_1+\cdots+j_l=i} \prod {K_0^{m_0}K_{j_1} \cdots
K_0^{m_{(l-1)}}K_{j_l}K_0^{m_l}}\\&\vdots&
\end{eqnarray*}
where $K_{j_{l'}}\neq K_0,1\le j_{l'}\le i$ is the coefficient
matrix of $j_{l'}$ degree in $z$ of $K(z)$. It is obvious that
$A_0$ is finite if and only if $K_0$ is nilpotent. Let $m$ be the
smallest integer such that $K_0^m=0$. Then $m_{l'} \in \left\{
0,1,\cdots,m-1 \right\}$ because the product term containing
$K_0^{m_{l'}}$ will vanish as $m_{l'}\ge m$. To prove
$I+K(z)+K^2(z)+\cdots$ is expandable, we have to prove $A_i$ is the
sum of finite terms for all $i$ as $K_0^m=0$. Now for every $A_i$,
let us consider the number of the terms. Because $j_1+\cdots+j_l=i$
and $1\le j_{l'}\le i$, the number of the solutions is less than
$i^i$, so there are less than $i^i$ choices of these $K_{j_{l'}}$
arrays. Then, for a certain $K_{j_{l'}}$ array, $K_0$ can be divided by
$K_{j_{l'}}$ into $l+1$ parts. So there are at most $m^{l+1}$ choices of
$K_0$ arrays because of $m_{l'}<m$, which is less than $m^{i+1}$.
Therefore, there are less than $i^i\times m^{i+1}$ terms for every
$A_i$.
\end{proof}

\begin{figure}[h]
\centering
        \includegraphics[width=8cm,
        height=6.5cm]{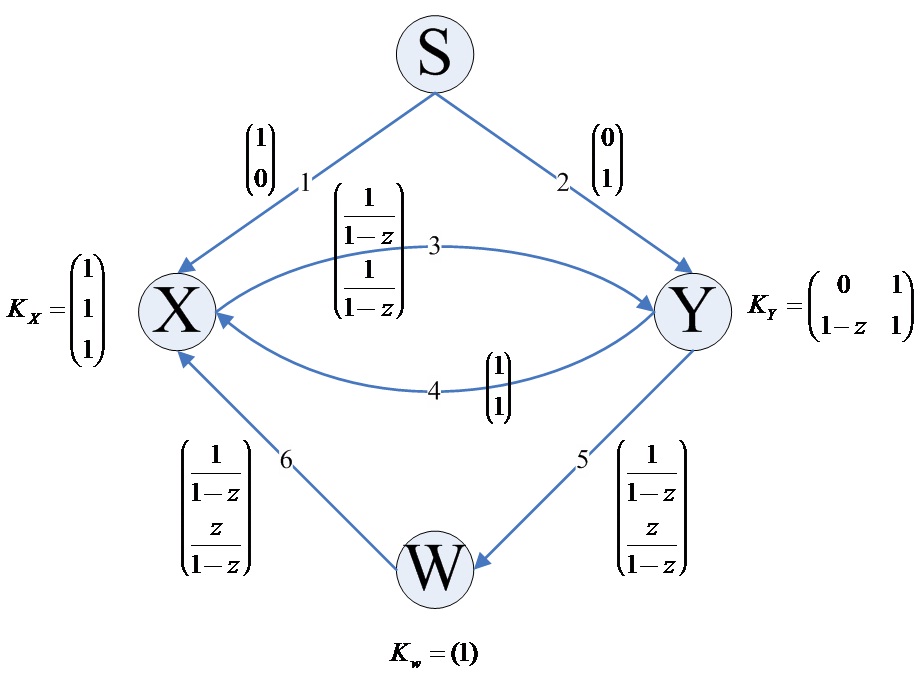}\\
    \hfill\caption{realizable $F\left<z\right>$ with $K_0$ non-nilpotent}
    \label{fig:nonilpotent}
\end{figure}

The expandable $I+K(z)+K^2(z)+\cdots$ implies invertible $I_n-K(z)$. But sometimes, $I_n-K(z)$ may be invertible whereas $K_0$ is non-nilpotent.

\noindent \textbf{Example 3.} As in Fig.\ref{fig:nonilpotent}, the LEK matrix $K(z)$ is \[K(z)=\left(
\begin{array}{*{6}{c}}0 & 0 & 1 & 0 & 0 & 0 \\ 0 & 0 & 0 & 0 & 1 & 0
\\ 0 & 0 & 0 & 1-z & 1 & 0 \\ 0 & 0 & 1 & 0 & 0 & 0 \\ 0 & 0 & 0 & 0
& 0 & 1 \\ 0 & 0 & 1 & 0 & 0 & 0
\end{array}\right),\]
and for $t=0$, $K_0$ is
\[K_0=\left( \begin{array}{*{6}{c}}0 & 0 & 1 & 0 & 0 & 0 \\ 0 & 0 & 0 & 0 & 1 & 0 \\ 0 & 0 & 0 & 1 & 1 & 0 \\ 0 & 0 & 1 & 0 & 0 & 0
\\ 0 & 0 & 0 & 0 & 0 & 1 \\ 0 & 0 & 1 & 0 & 0 & 0 \end{array}\right).\]
It is easy to check that $K_0^6\neq 0$, i.e., $K_0$ is
non-nilpotent.

On the other hand, it is easy to know that $I_n-K_0$ is
invertible, and we have
\begin{eqnarray*}F_0 &=&H_s(I_n-K_0)^{-1}  \\ &=& \left( \begin{array}{*{6}{c}}1 & 0 & 1 & 1 & 1 & 1 \\ 0 & 1 & 1 & 1 & 0 & 0
\end{array}\right)
\end{eqnarray*}

Meanwhile, $F(z)$ can be derived from (\ref{GM}):
\[F(z)=\left[ \begin{array}{*{6}{c}}1 & 0 & 1/(1-z) & 1 & 1/(1-z) & 1/(1-z) \\ 0 & 1 & 1/(1-z) & 1 & z/(1-z) & z/(1-z)
\end{array}\right]\]
We observe that the above $F_0$ and the constant term in the matrix power series representation of $F(z)$ are the same.

\begin{theorem}
Consider an $\omega$-dim $\mathbb{F}$-CNC with LEK matrix $K(z)$. The followings are equivalent.
\begin{enumerate}
\item[1)] $I_n-K_0$ is invertible over $\mathbb{F}$.
\item[2)] $I_n-K(z)$ is invertible over $\mathbb{F}[(z)]$.
\item[3)] $K(z)$ uniquely determines the GEK matrix $F(z)$.
\end{enumerate}
\end{theorem}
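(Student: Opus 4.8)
The plan is to close the implications 1)$\Rightarrow$2)$\Rightarrow$3)$\Rightarrow$1), reducing every step to the determinant of $I_n-K(z)$ and to the coefficient recursion induced by the identity (\ref{GM}). For 1)$\Leftrightarrow$2) I would invoke the determinant test for invertibility over the PID $\mathbb{F}[(z)]$ recalled in Section \Rmnum{2}: a square matrix $A(z)$ is invertible over $\mathbb{F}[(z)]$ if and only if $\det(A(z))$ is a unit, and a rational power series is a unit exactly when its constant term is nonzero --- necessity because $A(z)C(z)=I$ forces $\det(A(z))\det(C(z))=1$, and sufficiency because a nonzero constant term makes $A^{*}(z)/\det(A(z))$ a genuine inverse in $\mathbb{F}[(z)]$. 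The single observation that turns this into 1)$\Leftrightarrow$2) is that the constant term of $\det(I_n-K(z))$ equals $\det(I_n-K_0)$, since reducing the entrywise power series modulo $z$ commutes with taking the determinant. Hence $I_n-K(z)$ is invertible over $\mathbb{F}[(z)]$ iff $\det(I_n-K_0)\neq 0$ iff $I_n-K_0$ is invertible over $\mathbb{F}$.

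The implication 2)$\Rightarrow$3) is then immediate from (\ref{GM}): if $I_n-K(z)$ is invertible, the global description forces $F(z)=H_s(I_n-K(z))^{-1}$, which is the unique matrix over $\mathbb{F}[(z)]$ satisfying (\ref{GM}), so $K(z)$ determines a unique GEK matrix. This is already noted below (\ref{GM}) and needs only to be recorded.

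The real content is the converse 3)$\Rightarrow$1), which I would attack in contrapositive form: assuming $I_n-K_0$ is singular, I want to show that $K(z)$ does not determine a unique $F(z)$. Matching the coefficient of $z^m$ in (\ref{GM}) yields $F_0(I_n-K_0)=H_s$ and, for $m\geq 1$, $F_m(I_n-K_0)=\sum_{j=0}^{m-1}F_jK_{m-j}$, so every coefficient of a prospective GEK must be solved against the same singular matrix $I_n-K_0$. When $\det(I_n-K(z))=0$, the matrix has a nontrivial left null vector that can be cleared of denominators so as to lie in $\mathbb{F}[(z)]$; adding it to any GEK produces a second one, and if no GEK exists then the code is not normal either, so 3) fails. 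The case I expect to be the main obstacle is $\det(I_n-K(z))\neq 0$ with vanishing constant term: here the left null space over the fraction field is trivial, the fraction-field solution of (\ref{GM}) is already unique, and the null-vector perturbation is unavailable. To finish I would have to prove that this unique rational-function solution cannot lie in $\mathbb{F}[(z)]$ --- equivalently, that the recursion above is unsolvable in $\mathbb{F}[(z)]$ --- presumably by exploiting the special shape of $K(z)$: the zero columns at $In(s)$ that pin the source block of $F(z)$ to $I_\omega$, together with the zero diagonal forced by the absence of self-loops, in order to contradict solvability of the constant-term equation $F_0(I_n-K_0)=H_s$. Pinning this last case down rigorously, rather than carrying out the two routine determinant computations, is where I would concentrate the effort.
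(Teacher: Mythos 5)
Your handling of $1)\Leftrightarrow 2)$ (the constant term of $\det(I_n-K(z))$ equals $\det(I_n-K_0)$, combined with the unit criterion in $\mathbb{F}[(z)]$) and of $2)\Rightarrow 3)$ is correct, and it matches the paper's route while being cleaner than the paper's determinant-expansion argument. The genuine gap is exactly the case you isolate and postpone in $3)\Rightarrow 1)$: $\det(I_n-K(z))\neq 0$ but $\det(I_n-K_0)=0$. The claim you hope to prove there --- that the unique fraction-field solution $H_s(I_n-K(z))^{-1}$ of (\ref{GM}) can never lie in $\mathbb{F}[(z)]$ --- is false, so no amount of effort on the ``special shape of $K(z)$'' will close it. Concretely, take $\omega=1$ and the network $s\to v$, $v\to w$, $w\to v$ with channels $1=(s,v)$, $2=(v,w)$, $3=(w,v)$ and LEKs over $\mathbb{F}_2$ given by $k_{1,2}=z$, $k_{2,3}=1$, $k_{3,2}=1+z$, so that
\[
K(z)=\begin{pmatrix}0 & z & 0\\ 0 & 0 & 1\\ 0 & 1+z & 0\end{pmatrix},\qquad
K_0=\begin{pmatrix}0 & 0 & 0\\ 0 & 0 & 1\\ 0 & 1 & 0\end{pmatrix}.
\]
Then $I_3-K_0$ is singular (the row vector $(0,1,1)$ annihilates it) and $\det(I_3-K(z))=z$, which is nonzero but not a unit, so both 1) and 2) fail. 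Yet the GEK equations $f_1=1$, $f_2=zf_1+(1+z)f_3$, $f_3=f_2$ force $zf_2=z$, hence $f_2=f_3=1$: the unique solution $F(z)=(1,1,1)$ lies in $\mathbb{F}[(z)]$, so the LEKs determine a unique set of GEKs and 3) holds. The factor $z$ in $k_{1,2}$ cancels the pole of $(I_3-K(z))^{-1}$ at $z=0$. So the implication $3)\Rightarrow 1)$, as the statement is written, is false, and your remaining case is not merely hard but impossible.

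For comparison, the paper ``closes'' this case only through a hidden \emph{non sequitur} rather than an honest open case: from uniqueness of $F(z)$ it infers that each row of $F_0$ must be the unique solution of the field equation $x_i(I_n-K_0)=h_i$ in (\ref{F0}), whence $\mathrm{rank}(I_n-K_0)=n$. But uniqueness of the power series $F(z)$ only makes its constant term unique; it does not make that linear system uniquely solvable, because other solutions of the constant-term equation need not extend through the recursion $F_t(I_n-K_0)=\sum_{\tau=0}^{t-1}F_\tau K_{t-\tau}$. In the example above, $x(I_3-K_0)=(1,0,0)$ has the two solutions $(1,1,1)$ and $(1,0,0)$, but only the first extends: with $F_0=(1,0,0)$ the $t=1$ equation demands $F_1(I_3-K_0)=(0,1,0)$, which is unsolvable since every product $F_1(I_3-K_0)$ has equal second and third coordinates. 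So your instinct that this case is ``the real content'' is exactly right; the correct conclusion is that $1)\Leftrightarrow 2)$ and $2)\Rightarrow 3)$ hold, while 3) is strictly weaker than the other two.
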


\begin{proof}

$1)\Rightarrow 2)$  Since $I_n-K_0$ is invertible over $\mathbb{F}$, then
$\mbox{det}\left(I_n-K_0\right)\neq0$. By using the following property of the determinant repeatedly,

\begin{eqnarray*}
& & \left|\begin{array}{*{4}{c}}a_{11}+b_1 & a_{12}+b_2 & \cdots &
a_{1n}+b_n \\ a_{21} & a_{22} & \cdots & a_{2n}
\\ \vdots & \vdots & \ddots & \vdots \\ a_{n1} & a_{n2} & \cdots & a_{nn} \end{array}\right|  \\
&=& \left|\begin{array}{*{4}{c}}a_{11} & a_{12} & \cdots & a_{1n}\\
a_{21} & a_{22} & \cdots & a_{2n}
\\ \vdots & \vdots & \ddots & \vdots \\ a_{n1} & a_{n2} & \cdots & a_{nn} \end{array}\right|+\left|\begin{array}{*{4}{c}}b_1 & b_2 & \cdots & b_n
\\ a_{21} & a_{22} & \cdots & a_{2n} \\ \vdots & \vdots & \ddots & \vdots \\ a_{n1} & a_{n2} & \cdots & a_{nn} \end{array}\right|
\end{eqnarray*}
we have
\begin{eqnarray*}
\mbox{det}\left(I_n-K(z)\right)&=&\mbox{det}\left(I_n-K_0-K_1z-\cdots\right) \nonumber\\
&=&\mbox{det}\left(I_n-K_0\right)+z^n\mbox{det}\left(K_1\right)
\nonumber \\& &+z^{2n}\mbox{det}\left(K_2\right)+\cdots \nonumber
\\ & &+\mbox{det}\left( \mbox{cross-term matrix} \right) \nonumber \\
&\neq&0
\end{eqnarray*}
which indicates that there is at least one column of the cross-term matrix whose entries
are divisible by $z$, therefore the determinant of the cross-term
matrix is divisible by $z$. Because the constant term of
$\mbox{det}(I-K(z))$ is non-zero, $I-K(z)$ is invertible over
$\mathbb{F}[(z)]$.

$2)\Rightarrow 3)$ Due to equation (\ref{GM}), it is obvious that the GEK matrix $F(z)$ can be uniquely determined by $H_s(I_n-K(z))^{-1}$.

$3)\Rightarrow 1)$ Let $F(z)=\sum_t^{\infty}F_tz^t$ and
$K(z)=\sum_t^{\infty}K_tz^t$. By substituting these into (\ref{GM}), we have
\begin{equation}\label{eq2}
\sum_t(F_t-\sum_{\tau}F_{\tau}K_{t-\tau})z^t=H_s.
\end{equation}
Let the corresponding term coefficients be equal at both sides of the equation, then for $t=0$,
 \begin{equation}\label{F0}
 F_0(I-K_0)=H_s
 \end{equation}
It is sufficient to show that $F(z)$ cannot be uniquely determined by $K(z)$ if $rank(I-K_0)< n$. According to
(\ref{F0}), $F_0$ is uniquely determined by $K(z)$ only if its $i$-th row is the unique solution
of $x_i(I-K_0)=h_i$, i.e., $rank(I-K_0)=n$, where $x_i$ and $h_i$, $i=1,2,\cdots,\omega$, are the $i$-th row of $F_0$ and
$H_s$ respectively. Therefore, $I-K_0$ is invertible over $\mathbb{F}$.

Meanwhile, from (\ref{eq2}), we obtain,
\begin{eqnarray*}
F_1-F_0K_1-F_1K_0&=&0 \nonumber\\
\Rightarrow \qquad F_1&=&F_0K_1(I-K_0)^{-1}  \\
&\vdots & \nonumber \\
F_t(I-K_0)-\sum_{\tau=0}^{t-1}F_{\tau}K_{t-\tau}&=&0 \\
\Rightarrow \qquad
F_t&=&(\sum_{\tau=0}^{t-1}F_{\tau}K_{t-\tau})(I-K_0)^{-1}
\\&\vdots&
\end{eqnarray*}
\end{proof}

\vspace{5pt}

So far, we have discussed the conditions that the LEKs of a CNC can uniquely determine the GEKs. Fig.\ref{fig:relations} summarizes our results mentioned above. The conditions are
characterized in terms of $K_0$. They simplify the convolutional network encoder design and provide a mathematical basis for CNC.

\itwsection{Decoding of CNC}

Consider a practically feasible $\omega$-dim $\mathbb{F}$-CNC with GEK matrix $F(z) = [f_e(z)]_{e \in In(r)}$ at a sink node $r$. Similar to the field-based linear network codes, the $\mathbb{F}$-CNC is decodable at $r$ if and only if
$rank(F(z)) =\omega$. That is, there are $\omega$ incoming channels of $r$ whose GEKs are linearly independent. Moreover, as adopted in \cite{NWT2005,LS2009,LSS2011}, as long as we can find an $|In(r)| \times \omega$ {\em decoding matrix} $D(z)$ {\em over} $\mathbb{F}[(z)]$ such that $F(z)D(z) = z^LI$, the code is {\em decodable with delay L}. Based on $D(z)$, the source symbol vectors can be sequentially decoded via the {\em finite-state linear time-invariant shift registers} \cite{CCIAS1970}. However, the computation of $D(z)$ or even the initial full rank check of $F(z)$ over $\mathbb{F}[(z)]$ might encounter several issues in practice:
\begin{itemize}
\item We need the full knowledge of $F(z)$. However, due to possible existence of cycles in the network, even if the GEKs are polynomials, every entry in $F(z)$ may involve infinite terms. When the CNC is deployed in a randomized manner, the information of $F(z)$ will be carried along with the transmission symbols to the sink node time-slot by time-slot. Thus it will be impossible to know all the terms in every entry of $F(z)$ before checking its decodability, not saying the computation of $D(z)$.
\item Given that $F(z)$ is fully known to the receiver, the computation over $\mathbb{F}[(z)]$ takes high computational complexities due to the possible occurrences of quotient of polynomials.
\item Given that $F(z)$ is of full rank, we can not determine whether it is decodable with delay $L$ untill the calculation of $D(z)$.
\end{itemize}
The existence of a decoding matrix $D(z)$ over $\mathbb{F}[(z)]$ s.t. $D(z)F(z) = z^LI_\omega$ is a sufficient condition for the code's decodability with delay $L$. In order to make the decoding process easier to be handled in practice, especially in the randomized settings, the main goal of this section is to formulate a series of necessary and sufficient conditions on the code's decodability with delay $L$, which will only deal with the first $L+1$ terms $F_0, ..., F_L$ in $F(z) = \sum_{t \geq 0}{F_tz^t}$.

Let $x(z) = \sum_{t \geq 0}x_tz^t$ be the power series of symbol vectors generated at the source and $y(z) = \sum_{t \geq 0}y_tz^t$ the power series of symbol vectors received at sink node $r$. Moreover, denote by $x_0^L(z) = \sum_{t = 0}^L {x_t z^t}$ the sequence of symbol vectors generated from time unit 0 to $L$, and $y_0^L(z) = \sum_{t = 0}^L {y_t z^t }$ the sequence of received symbol vectors over the same period. Specific to the source node $s$ and the sink $r$, a CNC on the network can be regarded as a {\em linear sequential encoder} with encoding kernel $F(z)$. Thus, the definition of invertibility of linear sequential circuits in \cite{MS1968} can be applied to physically define the decodability of a CNC.

\begin{definition}\label{Def5}
A practically feasible $\mathbb{F}$-CNC is decodable at sink $r$ with delay $L$ if and only if for every non-negative integer $k$, the input segment $x_0^k(z)$ is uniquely determined by the
response segment $y_0^{L+k}(z)$.
\end{definition}

Due to the linearity of CNC, it suffices to check whether $x_0^0(z)$ is uniquely determined by $y_0^L(z)$ in order to determine decodability with delay $L$.

\begin{theorem}\label{feasible0}
A practically feasible $\mathbb{F}$-CNC is decodable with delay $L$ if and only if $x_0^0(z)$ is
uniquely determined by the response segment $y_0^L(z)$.
\end{theorem}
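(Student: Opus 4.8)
The plan is to prove the substantive ``if'' direction by induction on $k$, exploiting the linearity and time-invariance of the convolution $y(z)=x(z)F(z)$ that governs the source-to-sink relation. The ``only if'' direction is immediate: decodability with delay $L$ in the sense of Definition \ref{Def5} asserts unique determination of $x_0^k(z)$ by $y_0^{L+k}(z)$ for \emph{every} non-negative $k$, so specializing to $k=0$ gives exactly the stated single condition.

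First I would recast ``uniquely determined'' in linear-algebraic terms. Equating coefficients of $z^t$ in $y(z)=x(z)F(z)$ yields $y_t=\sum_{\tau=0}^{t}x_\tau F_{t-\tau}$, so for each $m$ the map $(x_0,\dots,x_m)\mapsto(y_0,\dots,y_m)$ is linear, with block lower-triangular Toeplitz structure and diagonal blocks $F_0$. By linearity, ``$x_0^k(z)$ is uniquely determined by $y_0^{L+k}(z)$'' is equivalent to the homogeneous statement that $y_0=\dots=y_{L+k}=0$ forces $x_0=\dots=x_k=0$. In particular, the hypothesis that $x_0^0(z)$ is determined by $y_0^L(z)$ becomes precisely: $y_0=\dots=y_L=0 \Rightarrow x_0=0$, which serves as the base case $k=0$.

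For the inductive step I would assume the claim for $k-1$ and suppose $y_0=\dots=y_{L+k}=0$. The first $L+k$ of these equations, combined with the inductive hypothesis, give $x_0=\dots=x_{k-1}=0$. The remaining and crucial work is the time-invariance argument: I would introduce the shifted sequences $x'_j:=x_{k+j}$ and $y'_j:=y_{k+j}$, and verify by re-indexing the convolution sum (using $x_0=\dots=x_{k-1}=0$) that $y'_j=\sum_{i=0}^{j}x'_i F_{j-i}$ for all $j\ge 0$. Hence $(x'(z),y'(z))$ is again a legitimate input-output pair of the same encoder $F(z)$, with the \emph{same} coefficient matrices $F_0,F_1,\dots$. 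Since $y'_0=y_k=0,\dots,y'_L=y_{L+k}=0$, applying the base case to this shifted pair yields $x'_0=x_k=0$, which together with $x_0=\dots=x_{k-1}=0$ closes the induction.

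The delicate point I expect to be the main obstacle is exactly this shift: one must confirm that re-indexing reproduces the original matrix power series coefficients $F_t$ rather than some truncated or altered kernel, so that the base hypothesis genuinely applies to $(x'(z),y'(z))$. This is where the time-invariance of the convolutional encoder is essential. Once it is verified, the induction establishes the full chain of conditions in Definition \ref{Def5} from the single case at $k=0$, thereby confirming that only the coefficient matrices $F_0,\dots,F_L$ need be examined to certify decodability with delay $L$.
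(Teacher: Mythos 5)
Your proposal is correct and follows essentially the same route as the paper: the paper's sufficiency argument is exactly your induction, phrased constructively as subtracting the known linear gains of $x_0,\dots,x_{k-1}$ from $y_{k},\dots,y_{L+k}$ so that the modified outputs $x_kF_0,\ x_kF_1+x_{k+1}F_0,\ \dots,\ x_kF_L+\cdots+x_{k+L}F_0$ reproduce the base-case situation by time-invariance. Your homogeneous reformulation (zero output segment forces zero input segment) and index shift are just the kernel-space dialect of that same subtraction step, and if anything make the appeal to linearity slightly more explicit.
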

\begin{proof}
Necessity follows directly from Definition \ref{Def5}. For sufficiency, suppose that $x_0^0(z) = x_0$ can be uniquely
determined by the sequence of output symbol vectors $x_0F_0, x_0F_1+x_1F_0, x_0F_2+x_1F_1+x_2F_0, \cdots, x_0F_L+\cdots+x_LF_0$ from time unit 0 to $L$. At time unit $L+1$, the output symbol vector is $ x_1F_{L+1}+\cdots+x_{L+2}F_0$. After subtracting the linear gain of $x_0$ from each of output symbol vectors from time unit 1 to $L+1$, we can determine the source symbol vector $x_1$ in the same manner as $x_0$ from the sequence of modified output symbol vectors $x_1F_0, x_1F_1+x_2F_0, \cdots, x_1F_L+\cdots+x_{L+1}F_0$. Following this way, at each time unit $L+k$ after subtracting the respective linear gains of $x_1, \cdots, x_{k-1}$ from the output symbol vectors $y_{L+1}, \cdots, y_{L+k}$, the source symbol vector $x_k$ can be uniquely determined by the modified sequence of output symbol vectors $x_kF_0, x_kF_1+x_{k+1}F_0, \cdots, x_kF_L+\cdots+x_{k+L}F_0$.
\end{proof}

\vspace{5pt}
Since $y_0^L(z) = \sum_{i=0}^Lx_iF_{L-i}z^i$, whether the code is decodable with delay $L$ is only related to the first $L+1$ terms in $F(z)$. Thus,

\newtheorem{corollary}{Corollary}
\begin{corollary}\label{Cor.1}
A practically feasible CNC with GEK matrix $F(z)$ at a sink $r$ is decodable with delay $L$ if and only if a practically feasible CNC with GEK matrix $F_0^L(z)$ at $r$ is decodable with delay $L$.
\end{corollary}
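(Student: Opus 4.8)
The plan is to reduce the statement directly to Theorem~\ref{feasible0}, whose content is precisely that decodability with delay $L$ is governed by a single finite linear relation. First I would invoke Theorem~\ref{feasible0} on each side: the CNC with kernel $F(z)$ is decodable with delay $L$ if and only if $x_0^0(z)=x_0$ is uniquely recoverable from its response segment $y_0^L(z)$, and, applying the same theorem to the second code, the CNC with kernel $F_0^L(z)$ is decodable with delay $L$ if and only if $x_0$ is uniquely recoverable from the corresponding response segment, which I will denote $\tilde y_0^L(z)$.

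Second, I would make explicit that the two response segments are computed by one and the same finite convolution. Writing $x(z)=\sum_{t\ge 0}x_t z^t$, the received stream satisfies $y_t=\sum_{\tau=0}^{t} x_\tau F_{t-\tau}$, so for every $n\le L$ the coefficient $y_n$ depends only on $x_0,\dots,x_n$ and on $F_0,\dots,F_n$, hence only on $F_0,\dots,F_L$. Since $F_0^L(z)=\sum_{t=0}^{L}F_t z^t$ carries exactly the same coefficient matrices $F_0,\dots,F_L$ as $F(z)$ and has vanishing coefficients beyond degree $L$, the input-to-output map $(x_0,\dots,x_L)\mapsto y_0^L(z)$ coincides term by term with $(x_0,\dots,x_L)\mapsto \tilde y_0^L(z)$. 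Consequently $y_0^L(z)=\tilde y_0^L(z)$ for every admissible input, so the recoverability conditions obtained in the first step are literally the same condition.

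Combining the two steps yields the biconditional at once: one code meets the Theorem~\ref{feasible0} criterion exactly when the other does, and the equivalence is symmetric in the two directions. I expect no genuine obstacle in the core argument, which is a truncation/dependency observation rather than a computation. The only points deserving care are bookkeeping: I must confirm that Theorem~\ref{feasible0} legitimately applies to $F_0^L(z)$ as a practically feasible encoding kernel, so that the ``$x_0$ from $y_0^L$'' reformulation is available on that side as well, and I must check the index ranges in the convolution so that no coefficient $F_t$ with $t>L$ can ever enter $y_0^L(z)$.
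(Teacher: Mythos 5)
Your proposal is correct and follows the paper's own reasoning exactly: the paper justifies this corollary by noting that $y_0^L(z)$ depends only on $F_0,\dots,F_L$ and then applying Theorem~\ref{feasible0}, which is precisely your two-step reduction, just written out with more care about the convolution indices. No gap; your added bookkeeping (applicability of Theorem~\ref{feasible0} to $F_0^L(z)$, which the corollary's hypothesis already grants) only makes explicit what the paper leaves implicit.
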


The delay constraint $L$ for the decodability of $F_0^L(z)$ in Corollary \ref{Cor.1} is crucial. For instance, if the GEK matrix $F(z)$ is $\left(\begin{array}{cc} 1+z & 1+z^2 \\ 1 & 1+z\end{array} \right)
$ over $\mathbb{F}_2$, it is not decodable. However, $F_0^1(z) = \left(\begin{array}{cc} 1+z & 1 \\ 1 & 1+z \end{array} \right)$ is decodable with delay 1.

Justified by Theorem \ref{feasible0}, as long as sink $r$ is able to decode the first source symbol vector $x_0$ at a certain time unit $L$, it is able to decode the $k^{th}$ source symbol vector at time unit $L+k$, no matter how the received symbol vector $y_{L+k}$ is
formed. For example, assume the GEK matrix at sink $r$ is
\[F(z)=\left(\begin{array}{cc}
  1 & z \\
  0 & 1+z \\
\end{array}\right)\]
Since $F(z)=F_0+F_1z = \left(\begin{array}{cc}
  1 & 0 \\
  0 & 1 \\
\end{array}\right) + \left(\begin{array}{cc}
  0 & 1 \\
  0 & 1 \\
\end{array}\right)z
,$
it is easy to see that the symbol vector $x_0$ can be decoded at time unit 0 with the field-based decoding matrix $D_0 = \left(\begin{array}{cc}
  1 & 0 \\
  0 & 1
\end{array}\right)$.
Assume that the input is $x(z)=(1,0)+(1, 1)z$. Then the sequence of received symbol vectors is $y(z)= x(z)F(z) = (1,0)+(1,0)z$. At time unit 0, we can recover the first symbol vector $x_0$ by \[y_0D_0 = (1, 0) \left(\begin{array}{cc}
  1 & 0 \\
  0 & 1
\end{array}\right) = (1, 0)\] At time unit 1, we know the linear gain of symbol vector $x_0$ in the received symbol vector $y_1$ is $F_1 = \left(\begin{array}{cc}
  0 & 1 \\
  0 & 1
\end{array}\right)$. The subtraction of $x_0F_1$ from $y_1$ yields $y'_1 = (1, 0)-(0, 1)=(1, 1)$. In the same way as getting $x_0$, the second source symbol vector $x_1$ can be recovered by \[y'_1D_0 = (1, 1)\left(\begin{array}{cc}
  1 & 0 \\
  0 & 1
\end{array}\right) = (1, 1).
\]
This type of soft decision sequential decoding method has been adopted in \cite{EF2004}.

Now, the condition of decodability with delay $L$ has been reduced to check whether $\widetilde{x}_0(z)$ can be recovered from $y_0^L(z)$. Because
\begin{eqnarray}
y(z) &=& x(z)F(z)  \nonumber \\
&=& (\sum_tx_tz^t)(\sum_tF_tz^t)  \nonumber \\
&=& \sum_t(\sum_{i=0}^tx_iF_{t-i})z^t
\end{eqnarray}
we have $y_0^L(z) = \sum_{t=0}^L(\sum_{i=0}^tx_iF_{t-i})z^t$. Alternatively, denote by $\overline{x}_L$ and $\overline{y}_L$, respectively, the row vectors $(x_0, \cdots, x_L)$ and $(y_0, \cdots, y_L)$ of the data symbols generated at source $s$ and received at $r$ from time unit 0 to $L$. Then,
\begin{eqnarray}\label{Decodingeqa}
\overline{y}_L = \overline{x}_L \overline{F}_L,
\end{eqnarray}
where
\begin{equation}\label{Sequ}
    \overline{F}_L = \left(
   \begin{array}{cccc}
     F_0 & F_1 & \cdots & F_L \\
     0 & F_0 & \cdots & F_{L-1} \\
     \vdots & \vdots & \ddots & \vdots \\
     0 & 0 & \cdots & F_0
   \end{array}
 \right)
\end{equation}
Based on $\overline{F}_L$, which is a matrix over $\mathbb{F}$, some necessary and sufficient conditions for recovering $x_0^0(z)$ from $y_0^L(z)$ can be derived.

\begin{lemma}\label{Lem3}
A practically feasible CNC with GEK matrix $F(z)$ for a sink node $r$ is decodable at $r$ with delay $L$ if and only if there is an $|In(r)| \times \omega$ matrix $D(z)$ over $\mathbb{F}[(z)]$ such that $\overline{F}_L\overline{D}_L = \left( \begin{array}{cc}
   \textbf{0} & I_\omega \\
   \textbf{0} & \textbf{0}  \\
\end{array}\right)$, where $\overline{D}_L$ is defined in the same manner as $\overline{F}_L$ from $F(z)$.
\end{lemma}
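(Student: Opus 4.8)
The plan is to reduce the block-matrix identity to a truncated version of the classical relation $F(z)D(z)=z^LI_\omega$ and then to link each direction with the recoverability of $x_0$ from $y_0^L(z)$, which is exactly decodability with delay $L$ by Theorem \ref{feasible0}. First I would record the bookkeeping fact that, since $\overline{F}_L$ and $\overline{D}_L$ are both block upper-triangular Toeplitz matrices assembled as in (\ref{Sequ}) from $F_0,\dots,F_L$ and $D_0,\dots,D_L$ respectively, their product is again block upper-triangular Toeplitz, and the block on its $d$-th super-diagonal equals $\sum_{a=0}^{d}F_aD_{d-a}$, i.e. the coefficient of $z^d$ in $F(z)D(z)$, for $0\le d\le L$. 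Hence the asserted equation $\overline{F}_L\overline{D}_L=\left(\begin{array}{cc}\mathbf{0}&I_\omega\\\mathbf{0}&\mathbf{0}\end{array}\right)$ holds if and only if $\sum_{a=0}^{d}F_aD_{d-a}=\mathbf{0}$ for $d=0,\dots,L-1$ and $\sum_{a=0}^{L}F_aD_{L-a}=I_\omega$; equivalently, $F(z)D(z)$ agrees with $z^LI_\omega$ through degree $L$. This already shows $D(z)$ may be taken to be the polynomial $\sum_{j=0}^{L}D_jz^j\in\mathbb{F}[(z)]$, so the lemma becomes a statement about the finitely many coefficient matrices $D_0,\dots,D_L$ over $\mathbb{F}$.

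For the \emph{if} direction I would right-multiply the decoding equation (\ref{Decodingeqa}) by $\overline{D}_L$, obtaining $\overline{y}_L\overline{D}_L=\overline{x}_L\overline{F}_L\overline{D}_L=\overline{x}_L\left(\begin{array}{cc}\mathbf{0}&I_\omega\\\mathbf{0}&\mathbf{0}\end{array}\right)$. Because the right-hand matrix carries $I_\omega$ only in its top-right $\omega\times\omega$ corner, the product $\overline{x}_L\left(\begin{array}{cc}\mathbf{0}&I_\omega\\\mathbf{0}&\mathbf{0}\end{array}\right)$ equals $(\mathbf{0},\dots,\mathbf{0},x_0)$, so the last $\omega$-block of $\overline{y}_L\overline{D}_L$ is $x_0$. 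Thus $x_0$ is a fixed $\mathbb{F}$-linear function of $y_0^L(z)=\overline{y}_L$ and is uniquely determined by it, and Theorem \ref{feasible0} then yields decodability with delay $L$.

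The converse is the substantive direction. Assuming decodability with delay $L$, Theorem \ref{feasible0} together with (\ref{Decodingeqa}) says that $x_0$ is uniquely determined by $\overline{y}_L=\overline{x}_L\overline{F}_L$; by linearity every $\overline{x}_L$ in the left kernel of $\overline{F}_L$ has vanishing first block $x_0$. Over the field $\mathbb{F}$ a linear map that annihilates the kernel of $\overline{F}_L$ factors through $\overline{F}_L$, so there is a matrix $\Psi$ over $\mathbb{F}$ of size $(L+1)|In(r)|\times\omega$ with $x_0=\overline{y}_L\Psi=\overline{x}_L\overline{F}_L\Psi$ for all $\overline{x}_L$; comparing with $x_0=\overline{x}_LE$, where $E$ is the block column whose top $\omega\times\omega$ block is $I_\omega$ and whose other blocks vanish, forces $\overline{F}_L\Psi=E$. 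Splitting $\Psi$ into blocks $\Psi_0,\dots,\Psi_L$ and setting $D_j:=\Psi_{L-j}$, the identity $\overline{F}_L\Psi=E$ is precisely the last block-column of the system $\overline{F}_L\overline{D}_L=\left(\begin{array}{cc}\mathbf{0}&I_\omega\\\mathbf{0}&\mathbf{0}\end{array}\right)$, which by the first paragraph is equivalent to the whole system. Taking $D(z)=\sum_{j=0}^{L}D_jz^j$ then furnishes the required decoding matrix.

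I expect the reduction of the first paragraph and the multiplication in the \emph{if} part to be routine; the one point demanding care is the converse, where an \emph{arbitrary} linear recovery map $\Psi$ for $x_0$ must be repackaged as the last block-column of a genuinely Toeplitz $\overline{D}_L$. The index reversal $D_j=\Psi_{L-j}$ is exactly what aligns the two structures, and the step deserving most attention is checking that this single block-column, propagated by the Toeplitz pattern, reproduces all of $\overline{F}_L\overline{D}_L$ correctly — equivalently, that the lower-degree convolution coefficients $\sum_{a=0}^{d}F_aD_{d-a}$ with $d<L$ do vanish, which is precisely what the block rows of $\overline{F}_L\Psi=E$ below the top one guarantee.
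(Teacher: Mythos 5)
Your proposal is correct and takes essentially the same approach as the paper: sufficiency by right-multiplying the decoding equation (\ref{Decodingeqa}) by $\overline{D}_L$ to isolate $x_0$, and necessity by packaging a field-based recovery matrix for $x_0$ as the last block-column $\left(D_L;\cdots;D_0\right)$ satisfying $\overline{F}_L\left(\begin{array}{c}D_L\\ \vdots\\ D_0\end{array}\right)=\left(\begin{array}{c}I_\omega\\ \textbf{0}\end{array}\right)$ and then propagating this through the block-Toeplitz structure to obtain the full identity. The only difference is that you explicitly justify why the recovery of $x_0$ can be taken to be a fixed $\mathbb{F}$-linear map (the kernel-factorization step), a point the paper asserts without argument; this is a welcome tightening rather than a different route.
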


\begin{proof}
Since $\overline{y}_L\overline{D}_L=\overline{x}_L\overline{F}_L\overline{D}_L=(0, \cdots, 0, x_0)$, the sufficiency follows. For necessity, if $x_0$ can be recovered from $\overline{y}_L$, then there is an $(L+1)|In(r)| \times \omega$ matrix $D = \left( \begin{array}{cccc}
   D_L  \\
  \vdots   \\
   D_0  \\
\end{array} \right)$
over $\mathbb{F}$ such that $\overline{y}_LD = x_0$. Equivalently,
\[
\left(   \begin{array}{cccc}
     F_0 & F_1 & \cdots & F_L \\
     0 & F_0 & \cdots & F_{L-1} \\
     \vdots & \vdots & \ddots & \vdots \\
     0 & 0 & \cdots & F_0 \\
   \end{array}
 \right)
 \left( \begin{array}{cccc}
   D_L  \\
   D_{L-1} \\
   \vdots   \\
   D_0 \\
\end{array}\right) =
 \left( \begin{array}{cccc}
   I_\omega  \\
   \textbf{0}\\
\end{array}\right),
\]
which implies that for every $0 \leq j \leq L-1$, $\sum_{i\geq 0}^jF_0D_i = \textbf{0}$. Therefore, for any $|In(r)| \times \omega$ matrix $D(z)$ with first $L+1$ matrix terms $D_0, \cdots, D_L$ in its matrix power series representation, $\overline{F}_L\overline{D}_L = \left( \begin{array}{cccc}
   \textbf{0} & I_\omega \\
   \textbf{0} & \textbf{0}  \\
\end{array}\right).$
\end{proof}

For technicity, assume $\overline{F}_{-1}$ is a zero $\omega \times |In(r)|$ matrix.

\begin{theorem}\label{Thm5}
A practically feasible $\omega$-dim $\mathbb{F}$-CNC with the GEK matrix $F(z)$ for a sink $r$ is
decodable at $r$ with delay $L$ if and only if
\begin{equation}\label{delaycheck}
rank(\overline{F}_L)-rank(\overline{F}_{L-1})=\omega.
\end{equation}
\end{theorem}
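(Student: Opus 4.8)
The plan is to reduce decodability with delay $L$ to a purely linear-algebra statement about the block Toeplitz matrix $\overline{F}_L$ of (\ref{Sequ}), and then to read off the rank condition from its shift structure. First I would invoke Theorem \ref{feasible0}: the code is decodable with delay $L$ if and only if the first source symbol vector $x_0$ is uniquely determined by $\overline{y}_L = \overline{x}_L\overline{F}_L$, the relation (\ref{Decodingeqa}). Since this map is $\mathbb{F}$-linear and $x_0$ is the projection of $\overline{x}_L = (x_0,\dots,x_L)$ onto its first block, uniqueness is equivalent to the implication $\overline{x}_L\overline{F}_L = \mathbf{0} \Rightarrow x_0 = \mathbf{0}$; that is, every vector in the left null space of $\overline{F}_L$ must have vanishing first block.

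Next I would exploit the shift structure of $\overline{F}_L$. Deleting its first block row and first block column leaves exactly $\overline{F}_{L-1}$, while the entries strictly below the top block in the first block column are all $\mathbf{0}$. Hence the last $L$ block rows of $\overline{F}_L$ constitute the matrix $[\mathbf{0}\mid\overline{F}_{L-1}]$, whose row space has dimension $rank(\overline{F}_{L-1})$. Writing $\overline{x}_L = (x_0, \overline{x}_L')$ with $\overline{x}_L' = (x_1,\dots,x_L)$, the equation $\overline{x}_L\overline{F}_L = \mathbf{0}$ then says precisely that $x_0\,(F_0,F_1,\dots,F_L)$, the image of $x_0$ under the top block row, equals $-\overline{x}_L'[\mathbf{0}\mid\overline{F}_{L-1}]$, i.e. it lies in the row space of those last $L$ block rows.

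The final step interprets the absence of a nonzero such $x_0$ as a rank increment. There exists a nonzero $x_0$ with $x_0(F_0,\dots,F_L)$ in the row space of the last $L$ block rows if and only if the $\omega$ rows of the top block $(F_0,\dots,F_L)$ are linearly dependent modulo that row space; equivalently, adjoining the top block row raises the rank by strictly fewer than $\omega$. Because the top block contributes only $\omega$ rows, one always has $rank(\overline{F}_L) \le rank(\overline{F}_{L-1}) + \omega$, and decodability is exactly the case of equality, which is (\ref{delaycheck}). I expect the main point to get right is the block identification that the bottom-right $L\omega \times L|In(r)|$ submatrix of $\overline{F}_L$ is $\overline{F}_{L-1}$ while its first block column is otherwise zero; once this is established the rank bookkeeping is immediate. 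A minor boundary check is the base case $L=0$, where $\overline{F}_{-1}$ is the zero matrix by the stated convention and the criterion collapses to $rank(F_0)=\omega$, the expected zero-delay decoding condition.
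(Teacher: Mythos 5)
Your proof is correct, and it reaches (\ref{delaycheck}) by a genuinely different route than the paper. The paper derives Theorem \ref{Thm5} from Lemma \ref{Lem3}, i.e., from the existence of a field-based decoding matrix $\overline{D}_L$ satisfying $\overline{F}_L\overline{D}_L = \left(\begin{smallmatrix} \textbf{0} & I_\omega \\ \textbf{0} & \textbf{0} \end{smallmatrix}\right)$: for necessity it reads off from $(F_0 \ \cdots \ F_L)D = I_\omega$ and $(\textbf{0}\ \overline{F}_{L-1})D = \textbf{0}$ that the top block row has rank $\omega$ and that its row space meets the row space of $(\textbf{0}\ \overline{F}_{L-1})$ only in the zero vector; for sufficiency it constructs such a $D$ by showing that the map $u \mapsto (F_0 \ \cdots \ F_L)u$ restricted to $\textit{Null}(\textbf{0}\ \overline{F}_{L-1})$ is surjective onto $\mathbb{F}^\omega$ via a dimension count. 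You bypass the decoding matrix altogether: from Theorem \ref{feasible0} and (\ref{Decodingeqa}) you recast decodability as the left-kernel condition that $\overline{x}_L\overline{F}_L = \textbf{0}$ forces $x_0 = \textbf{0}$, and then observe that this holds exactly when the $\omega$ rows of $(F_0 \ \cdots \ F_L)$ remain linearly independent modulo the row space of the last $L$ block rows $(\textbf{0}\ \overline{F}_{L-1})$, which is precisely the rank increment (\ref{delaycheck}); your block identifications (bottom rows equal $(\textbf{0}\ \overline{F}_{L-1})$, padding by zero columns preserves rank) are all sound, as is the $L=0$ boundary case. Both proofs turn on the same decomposition of $\overline{F}_L$ into its top block row and its bottom $L$ block rows, but yours is more elementary and symmetric between the two directions: it needs neither the surjectivity construction nor a step that the paper's Lemma \ref{Lem3} uses implicitly, namely that if $x_0$ is recoverable from $\overline{y}_L$ at all, then it is recoverable by a \emph{linear} map over $\mathbb{F}$. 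What the paper's constructive detour buys is the explicit decoder $\overline{D}_L$, which it immediately reuses in the sequential decoding rule (\ref{decodingalg}); to extract an actual decoder from your argument, one would still have to append the paper's surjectivity step or an equivalent solvability argument.
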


\begin{proof}
We first show the necessity. According to Lemma \ref{Lem3}, there is an $|In(r)| \times \omega$ matrix $D(z)$ over $\mathbb{F}[(z)]$ such that $\overline{F}_L\overline{D}_L = \left( \begin{array}{cccc}
   \textbf{0} & I_\omega \\
   \textbf{0} & \textbf{0} \\
\end{array}\right)$. Consequently,
\begin{equation}\label{bijection_F_D}
(F_0 \ \cdots \ F_L) D = I_\omega
\end{equation}
and
\begin{equation}\label{Kernel_D}
(\textbf{0} \ \overline{F}_{L-1}) D = \textbf{0}.
\end{equation}
where $D = \left( \begin{array}{c}
   D_L \\
   \vdots  \\
   D_0 \\
\end{array}\right)$. According to the matrix property that $rank(AB) \le min \{rank(A), rank(B)\}$, equation (\ref{bijection_F_D}) implies $rank(F_0 \ \cdots \ F_L) \ge \omega$. On the other hand, since $(F_0 \ \cdots \ F_L)$ only has $\omega$ rows, $rank(F_0 \ \cdots \ F_L) \le \omega$. We obtain $rank(F_0 \ \cdots \ F_L) = \omega$. As a result, there does not exist a non-zero $\omega$-dim row vector $\alpha$ subject to $\alpha \cdot (F_0 \ \cdots \ F_L)D = 0 $. In other words, the intersection of the row space and the kernel of column space of $D$ is only the zero vector. However, the row space of $(\textbf{0} \ \overline{F}_{L-1})$ is a subspace of the kernel of the column space of $D$ because of (\ref{Kernel_D}). Therefore, there is no common nonzero vector between the row space of $(F_0 \  F_1 \  \cdots \ F_L)$ and the row space of $(\textbf{0} \ \overline{F}_{L-1})$. The necessity follows.

For sufficiency, if $rank(\overline{F}_L)-rank(\overline{F}_{L-1})=\omega$, then
\begin{enumerate}
\item
the first $\omega$ row vectors in $\overline{F}_L$ have rank $\omega$; and
\item
there is no common nonzero vector between the vector spaces generated by the first $\omega$ and the last $\omega L$ rows in $\overline{F}_L$.
\end{enumerate}

\noindent Denote by $\textit{Null}(\textbf{0} \ \overline{F}_{L-1})$ the null space of last $\omega L$ rows in $\overline{F}_L$ and by $\left<\cdot\right>$ the row space of a matrix. Define a linear transformation $\phi: \textit{Null}(\textbf{0} \ \overline{F}_{L-1}) \rightarrow \mathbb{F}^\omega$ with $\phi(u) = (F_0 \cdots F_L)u$. Prescribed by condition 2 above, \[\left<(F_0 \ \cdots \ F_L)\right> \bigcap \left<(\textbf{0} \ \overline{F}_{L-1})\right> = \{0\}.\]
Moreover, the kernel of $\phi$ is
\[ \{u \in \mathbb{F}^\omega: (\textbf{0} \ \overline{F}_{L-1})u=\textbf{0} \ and \ (F_0 \cdots F_L)u = 0\}.\]
\noindent Thus,
\begin{eqnarray*}
  & &\dim(\ker(\phi)) \\
  &=& |In(r)|(L+1) - rank(\overline{F}_{L-1}) - rank(F_0 \ \cdots \ F_L) \\
   &=& |In(r)|(L+1) - rank(\overline{F}_{L-1}) - \omega.
\end{eqnarray*}

\noindent On the other hand, \[
\dim(\textit{Null}(\textbf{0} \ \overline{F}_{L-1})) = |In(r)|(L+1) - rank(\overline{F}_{L-1}). \]
Hence, the cardinality of the image of $\phi$ is $\mathbb{F}^{\omega}$ and then $\phi$ is surjective. As a result, there exist $\omega$ $|In(r)|(L+1)$-dim column vectors $u_1, \cdots, u_\omega$ over $\mathbb{F}$ such that
\[ (F_0 \ \cdots \ F_L)(u_1 \ \cdots \ u_\omega) = I_\omega, \ and
\]
\[
(\textbf{0} \ \overline{F}_{L-1})(u_1 \ \cdots \ u_\omega) = \textbf{0}.
\]
\end{proof}

By checking (\ref{delaycheck}) iteratively, we can find the minimal decoding delay at sink $r$ which has been characterized in \cite{GC2010}. If the CNC with GEK matrix $F(z)$ for $r$ is known to be decodable with delay $L$ at $r$, an $|In(r)|(L+1) \times \omega$ field-based decoding matrix $\overline{D}_L$ as prescribed in Lemma \ref{Lem3} can be calculated from $\overline{F}_L$. Based on this matrix, we can decode the source symbol vector $x_k$ at time unit $k+L, k > 0$, as follows. Since
\begin{eqnarray*}
(y_k, \cdots, y_{k+L}) = (x_0, \cdots, x_{k+L})
\left(
   \begin{array}{ccc}
     F_k & \cdots & F_{k+L} \\
     \vdots & \cdots & \vdots \\
     F_0 & \cdots & F_{L} \\
     0 & \ddots & \vdots \\
     0 & \cdots & F_0 \\
   \end{array}
 \right),
\end{eqnarray*}

\begin{eqnarray*}
   & & (x_k, \cdots, x_{k+L})\overline{F}_L \\
   &=& (y_k, \cdots, y_{k+L}) - \overline{x}_{k-1}
\left(
   \begin{array}{ccc}
     F_k & \cdots & F_{k+L} \\
     \vdots & \cdots & \vdots \\
     F_1 & \cdots & F_{1+L} \\
   \end{array}
 \right)
\end{eqnarray*}

\noindent Thus, the source symbol vector $x_k$ can be decoded via
\begin{equation}\label{decodingalg}
 \left[(y_k, \cdots, y_{k+L}) - \overline{x}_{k-1} \left(
   \begin{array}{ccc}
     F_k & \cdots & F_{k+L} \\
     \vdots & \cdots & \vdots \\
     F_1 & \cdots & F_{1+L} \\
   \end{array}
 \right) \right] \left(\begin{array}{c}
  D_L \\
  \vdots \\
  D_0 \\
\end{array} \right)
\end{equation}
The field-based decoding algorithm (\ref{decodingalg}) may adopt different decoding matrices $\overline{D}_L$ subject to $\overline{F}_L\overline{D}_L = \left( \begin{array}{cc}
   \textbf{0} & I_\omega \\
   \textbf{0} & \textbf{0}  \\
\end{array}\right)$. For instance, if the GEK matrix is $F(z)=\left(\begin{array}{cc}
  1 & 1 \\
  0 & z \\
\end{array}\right)$ over $\mathbb{F}_2$ and $\overline{F}_1=\left(\begin{array}{cccc}
  1 & 1 & 0 & 0 \\
  0 & 0 & 0 & 1 \\
  0 & 0 & 1 & 1 \\
  0 & 0 & 0 & 0 \\
\end{array}\right)$, then either $\left(\begin{array}{cccc}
  0 & 1 & 1 & 0 \\
  0 & 1 & 0 & 0 \\
  0 & 0 & 0 & 1 \\
  0 & 0 & 0 & 1 \\
\end{array}\right)$ or $\left(\begin{array}{cccc}
  0 & 1 & 0 & 0 \\
  0 & 1 & 1 & 0 \\
  0 & 0 & 0 & 1 \\
  0 & 0 & 0 & 1 \\
\end{array}\right)$ can be adopted to sequentially decode source symbol vectors $x_0, \ x_1, \ \cdots$ with delay $1$.

Similar to the one proposed in \cite{ENCFCN2005}, the present field-based algorithm is time variant. But the complete information on $F(z)$ is not required here. On the other hand, if the matrix $D(z)$ over $\mathbb{F}[(z)]$ is further computed such that $F(z)D(z) = z^LI_{\omega}$, the source symbol vector $x_k$ can be time invariantly decoded at time unit $L+k$ by, in particular, time invariant linear shift registers.

There is a useful consequence of Theorem \ref{Thm5} for preliminary decodability check with lower computational complexity. This has been adopted in the adaptive random construction algorithm for a CNC in \cite{GCSM2011}.

\begin{corollary}
A sink node $r$ with GEK matrix $F_r(z)=\sum_tF_tz^t$ is
decodable with delay $L$ only if
\begin{equation}\label{check1}
rank\left(
        \begin{array}{cccc}
          F_0 & F_1 & \cdots & F_L \\
        \end{array}
      \right)=\omega
\end{equation}

\end{corollary}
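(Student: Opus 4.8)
The plan is to read this necessary condition directly off the necessity half of Theorem \ref{Thm5}, where the equality $rank(F_0 \ \cdots \ F_L)=\omega$ already appears as an intermediate step. First I would assume the CNC is decodable at $r$ with delay $L$. By Lemma \ref{Lem3} this supplies an $|In(r)| \times \omega$ matrix $D(z)$ over $\mathbb{F}[(z)]$ whose associated block matrix satisfies
\begin{equation*}
\overline{F}_L\,\overline{D}_L = \left( \begin{array}{cc}
   \textbf{0} & I_\omega \\
   \textbf{0} & \textbf{0}
\end{array}\right).
\end{equation*}
Writing $D = \left(\begin{array}{c} D_L \\ \vdots \\ D_0 \end{array}\right)$ for the last block-column of $\overline{D}_L$ (that is, the field coefficient matrices of $D(z)$ listed in reverse order), the top-right $\omega \times \omega$ block of the identity above reads $(F_0 \ F_1 \ \cdots \ F_L)\,D = I_\omega$, which is exactly equation (\ref{bijection_F_D}) from the proof of Theorem \ref{Thm5}.

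From this factorization the conclusion follows by a two-sided rank estimate. Applying $rank(AB) \le \min\{rank(A), rank(B)\}$ to $(F_0 \ \cdots \ F_L)\,D = I_\omega$ gives $\omega = rank(I_\omega) \le rank(F_0 \ \cdots \ F_L)$. On the other hand, $(F_0 \ \cdots \ F_L)$ has only $\omega$ rows, so its rank is at most $\omega$. Combining the two bounds forces $rank(F_0 \ \cdots \ F_L)=\omega$, which is (\ref{check1}).

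I do not anticipate a genuine obstacle, since the corollary is a strict weakening of Theorem \ref{Thm5}: it retains only the ``only if'' direction and discards the refinement by $rank(\overline{F}_{L-1})$, so its whole content is already contained in that theorem's proof. The only care needed is the bookkeeping confirming that the top-right block of $\overline{F}_L\overline{D}_L$ is indeed the product $(F_0 \ \cdots \ F_L)D$. As an alternative that avoids $D(z)$ entirely, one could argue from the statement of Theorem \ref{Thm5} alone: the first $\omega$ rows of $\overline{F}_L$ form $(F_0 \ \cdots \ F_L)$ while the remaining rows form $(\textbf{0} \ \overline{F}_{L-1})$ with $rank(\textbf{0} \ \overline{F}_{L-1})=rank(\overline{F}_{L-1})$, so the hypothesis $rank(\overline{F}_L)-rank(\overline{F}_{L-1})=\omega$ says that adjoining these $\omega$ rows raises the rank by the full amount $\omega$, which is possible only if those $\omega$ rows are linearly independent, i.e. $rank(F_0 \ \cdots \ F_L)=\omega$.
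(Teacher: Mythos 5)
Your proposal is correct and follows essentially the same route as the paper: the paper states this corollary as a direct consequence of Theorem \ref{Thm5}, whose necessity proof contains verbatim the steps you reproduce --- invoking Lemma \ref{Lem3} to get $(F_0 \ \cdots \ F_L)D = I_\omega$ and then applying $rank(AB)\le\min\{rank(A),rank(B)\}$ together with the $\omega$-row bound to force $rank(F_0 \ \cdots \ F_L)=\omega$. Your alternative argument reading the conclusion off the rank condition (\ref{delaycheck}) via subadditivity of rank under row adjunction is also valid and slightly more self-contained, but it is a cosmetic variation rather than a genuinely different approach.
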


This is a necessary but not sufficient condition to decode with
delay $L$ at sink $r$. When determining whether we can start decoding with delay $L$, we first check (\ref{check1}). From the time unit $L$ that (\ref{check1}) is satisfied, we shall turn back to check (\ref{delaycheck}) instead.

\begin{figure}[h]
  \includegraphics[width=8cm,height=7cm]{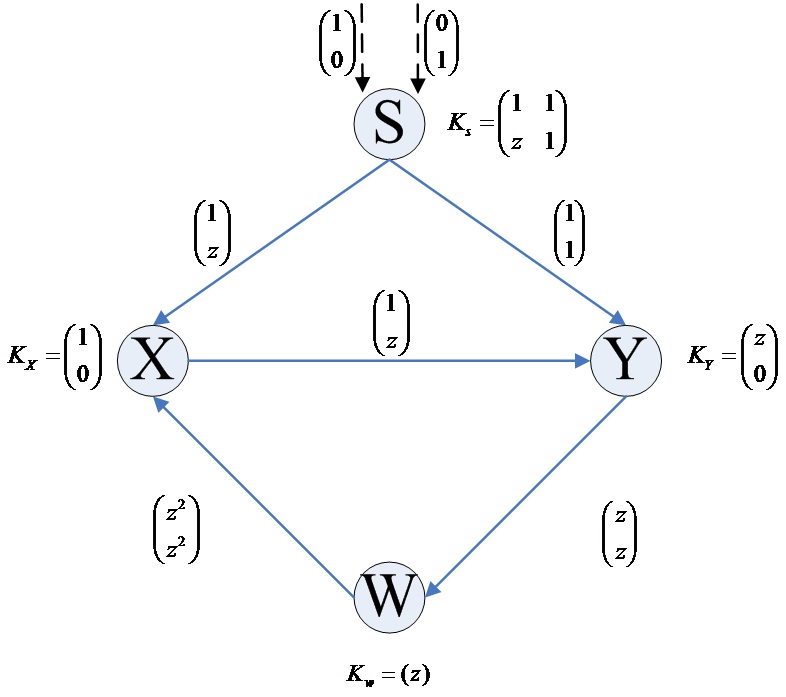}\\
  \caption{A Convolutional Code}\label{NS}
\end{figure}

\noindent \textbf{Example 4.} Fig.\ref{NS} depicts the LEKs and GEKs of a practically feasible CNC on a cyclic network. We shall illustrate the decoding procedure of this code at sink $X$. The GEK matrix for $X$ is
\begin{eqnarray*}
F(z) &=& \left( \begin{array}{*{2}{c}}1 & z^2 \\ z & z^2\end{array}\right) \\
&=&
\left( \begin{array}{*{2}{c}}1 & 0 \\ 0 & 0\end{array}\right)+
\left( \begin{array}{*{2}{c}}0 & 0 \\ 1 & 0\end{array}\right)z+
\left( \begin{array}{*{2}{c}}0 & 1 \\ 0 & 1\end{array}\right)z^2.
\end{eqnarray*}
First, we check at each time unit $t$ the rank of $(F_0 \ \cdots \ F_t)$ until it is equal to $\omega$. An easy check in this example gets $rank(F_0 \ F_1) = \omega$. Then, starting from $t = 1$, we continue to check $rank(\overline{F}_t)-rank(\overline{F}_{t-1})$ until it is equal to $\omega$. At $t = 1$, \[\overline{F}_1=\left( \begin{array}{*{2}{c}}F_0 & F_1 \\ 0 & F_0\end{array}\right)
=\left( \begin{array}{*{4}{c}}1 & 0 & 0 & 0 \\ 0 & 0 & 1 & 0\\0 & 0
& 1 & 0 \\ 0 & 0 & 0 & 0 \end{array}\right).\]
Thus, $rank(\overline{F}_1) = 2$ and $rank(\overline{F}_1) - rank(\overline{F}_0) = 1$. As a result, even if $rank(F_0 F_1)$ is of full rank, we are still not be able to start decoding. At time unit $t=2$, we have
\[\overline{F}_2 = \left( \begin{array}{*{3}{c}}F_0 & F_1 & F_2 \\ 0 & F_0 & F_1\\ 0 & 0 & F_0\end{array}\right)\\
 = \left( \begin{array}{*{6}{c}}1 & 0 & 0 & 0 & 0 & 1 \\ 0 & 0 & 1 & 0 & 0 & 1 \\0 & 0 & 1 & 0 & 0 & 0 \\
 0 & 0 & 0 & 0 & 1 & 0 \\0 & 0 & 0 & 0 & 1 & 0 \\0 & 0 & 0 & 0 & 0 & 0
 \end{array}\right).\]
Hence, $rank(\overline{F}_2)=4$ and $rank(\overline{F}_2) - rank(\overline{F}_1)=2$. This implies that we are able to compute a matrix
\[\overline{D}_2 = \left( \begin{array}{*{3}{c}}D_0 & D_1 & D_2 \\ 0 & D_0 & D_1\\ 0 & 0 & D_0\end{array}\right)\\
 = \left( \begin{array}{*{6}{c}}1 & 0 & 0 & 0 & 1 & 1 \\ 0 & 0 & 1 & 1 & 0 & 0 \\0 & 0 & 0 & 0 & 0 & 0 \\
 0 & 0 & 0 & 1 & 1 & 1 \\0 & 0 & 0 & 0 & 0 & 0 \\0 & 0 & 0 & 0 & 0 & 1
 \end{array}\right).\] such that
\[\overline{F}_2\overline{D}_2 = \left( \begin{array}{cc}
   \textbf{0} & I_\omega \\
   \textbf{0} & \textbf{0}  \\
\end{array}\right)\]
Following this, the source symbol vector $x_0$ can be decoded via $(y_0 \  y_1 \ y_2)\left(\begin{array}{c} D_2 \\ D_1 \\ D_0 \\ \end{array}\right)$ and $x_k$, $k>0$, can be sequentially decoded via (\ref{decodingalg}).

On the other hand, if we plan to start the sequential decoding $x_0, x_1, \cdots$ with delay 2 in a time-invariant manner, we need to find the matrix $D(z)$ such that $F(z)D(z)=z^2I$. Here, it can be deduced that
\[D(z)=\left( \begin{array}{*{2}{c}}z^2/(1-z) & -z^2/(1-z) \\ -z/(1-z) & 1/(1-z)\end{array}\right).\]

\itwsection{Conclusion}

In this paper, we study convolutional network coding by means of matrix power series representation and conclude a few new results. Firstly, for a convolutional network code (CNC) over a single source network with possible cycles, we show that the nilpotent constant coefficient matrix $K_0$ of the LEK matrix $K(z)$ is sufficient to determine whether the code is practically feasible. The encoding topology w.r.t. $K_0$ is also introduced to illustrate the encoding order of the code. Additionally, some equivalent conditions are presented to uniquely determine the GEKs based on $K(z)$.

For decoding of a CNC, we provide a physical definition of decodability at a sink node $r$ with delay $L$, which only involves partial encoding information. Based on this new definition, several necessary and sufficient conditions are established for decodability according to the first $L+1$ terms in the matrix power series representation of the GEK matrix $F(z)$ at $r$. They yield a less computational method for decodability check, and does not require all terms of $F(z)$, which may be infinite because of cycles in the network. As a result, CNC becomes possible to be deployed in a decentralized manner.

There remain several open problems in the study of CNC. One of the major challenges is to design the coding scheme under the more practical scenario that delay conditions vary from time to time. The design of a CNC with minimal decoding delay is another interesting direction for future research.

\itwacknowledgments

This work is partially presented in conference AEW 2010. The first two authors are funded by the National Science Foundation(NSF) under
grant No.60832001. The last author is supported by AoE grant E-02/08 from the University Grants Committee of the Hong Kong SAR, China.

\end{itwpaper}

\begin{itwreferences}
\bibitem{NIF2000}
R. Alshwede, N.Cai, S.-Y. R. Li and R. W. Yeung, ``Network
information flow," \emph{IEEE Transactions on Information Theory},
vol. 46, pp. 1204-1216, Feb. 2000.
\bibitem{LWC2003}
S.-Y. R. Li, R. W. Yeung and N.Cai, ``Linear network coding,"
\emph{IEEE Transactions on Information Theory}, vol. 49, No. 2, pp.
371-381, Feb. 2003.
\bibitem{AAATNC2003}
R. Koetter and M. Medard, ``An algebraic approach to network coding,"
\emph{IEEE/ACM Transactions on networking}, vol. 11, No. 5, Oct.
2003.
\bibitem{MK2002}
M. Medard and R. Koetter, ``Beyond routing: An algebraic approach to
network coding," in \emph{INFOCOM}, vol. 1, pp. 122-130, July 2002.
\bibitem{NWT2005}
R. W. Yeung, S.-Y. R. Li, N. Cai and Z. Zhang, ``Network coding
theory," \emph{Foundation and Trends in Communications and
Information Technology}, vol. 2, ISSN:1567-2190, 2005.
\bibitem{OCNC2006}
S.-Y. R. Li and R. W. Yeung, ``On convolutional network coding,"
\emph{IEEE Transactions on Information Theory}, pp. 1743-1747, Jul.
2006.
\bibitem{CCIAS1970}
G. D. Forney, ``Convolutional codes I: algebraic structure,"
\emph{IEEE Trans. Info. Thy}, vol. 16, pp. 720-738, Nov. 1970.

\bibitem{RFOCNC2008}
S.-Y. R. Li and Siu Ting Ho, ``Ring-theoretic foundation of
convolution network coding," \emph{NetCod2008, CUHK, Hong Kong},
Jan. 2008.

\bibitem{EF2004}
E.Erez and M. Feder, ``Convolutional network codes," \emph{IEEE
International Symposium on Information Theory}, Chicago, June
27-July 2, 2004.
\bibitem{ENCFCN2005}
E. Erez and M. Feder, ``Efficient network codes for cyclic networks,"
\emph{IEEE Trans. Inf. Theory}, vol.56, no.8, pp.3862-3878, Aug., 2010.
\bibitem{LS2009}
S.-Y. R. Li and Q. T. Sun, ``Network Coding Theory via Commutative Algebra," \emph{IEEE Trans. Inf. Theory}, vol.57, no.1, pp.403-415, Jan., 2011.
\bibitem{SOTPM1962}
Ayres and F. Jr, ``Schaum's Outline of Theory and Problems of
Matrices," \emph{New York: Schaum. p. 11}, 1962.
\bibitem{ACBNCCC2004}
C. Fragouli and E. Soljanin, ``A connection between network coding
and convolutional codes," \emph{2004 IEEE Conference on Communications}, pp.
661-666, 2004.
\bibitem{CG2009}
N. Cai and Wangmei Guo, ``The conditions to determine convolutional
network coding on matrix representation," \emph{NetCod2009},
Lausanne, Switzerland, Jun. 2009.
\bibitem{MS1968}
J. L. Massey and M. K. Sain, ``Inverses of linear sequential
circuits," \emph{IEEE Transactions on Computers}, vol. 100, No. 4,
pp. 330-337, Apr. 1968.
\bibitem{GC2010}
Wangmei Guo and N. Cai, ``The minimum decoding delay of convolutional
network coding," \emph{IEICE Trans. on Fundamentals of Electronics, Communications and Computer Sciences}, vol.E93.A, Issue 8, pp. 1518-1523, Aug. 2010.

\bibitem{GCSM2011}
Wangmei Guo, N. Cai, X. Shi and M. Medard, ``Localized Dimension Growth in Random Network Coding: A Convolutional Approach," \emph{ISIT}, St. Petersburg, Russia, Jul 30-Aug. 5, 2011.

\bibitem{LSS2011}
S.-Y.R. Li, Q. T. Sun, Z. Shao, ``Linear network coding: theory and algorithms," \emph{Proc. IEEE}, pp. 372-387, vol.99, no.3, 2011.

\end{itwreferences}

\end{document}